\documentclass[letterpaper, 11pt]{article}
\pagestyle{plain}                                                      
\setlength{\textwidth}{6.5in}     
\setlength{\oddsidemargin}{0in}   
\setlength{\evensidemargin}{0in}  
\setlength{\textheight}{9in}    
\setlength{\topmargin}{0in}       
\setlength{\headheight}{0in}      
\setlength{\headsep}{0in}         
\setlength{\footskip}{.5in}       
\setlength {\parskip}{4pt}                                             
\bibliographystyle{abbrv}                                           

\usepackage{appendix}

\usepackage{titlesec}
\titlespacing{\section}{0pt}{1.5ex}{1ex}
\titlespacing{\subsection}{0pt}{1.3ex}{0.5ex}
\titlespacing{\subsubsection}{0pt}{1.3ex}{0.5ex}
\titlespacing{\paragraph}{0pt}{1.3ex}{2ex}


\newcommand{\comment}[1]{}

\usepackage{graphicx}
\usepackage{subfigure}
\usepackage{amsmath}
\usepackage{mdwlist}
\usepackage{pxfonts}

\usepackage{algorithm}

\usepackage{xspace}
\newcommand{\BSB}{{\tt Broadcast\_Default}\xspace}

\newcommand{\sg}{{\mathcal G}}
\newcommand{\sv}{{\mathcal V}}
\newcommand{\se}{{\mathcal E}}

\newcommand{\sa}{{\mathcal A}}


%

\newcommand{\mybox}[1]{\vspace{5pt}\centerline{\framebox{\parbox[c]{0.97\textwidth}{#1}}}\vspace{5pt}}

\newenvironment{proof}{\paragraph{\bf Proof:}}{\hspace*{\fill}\(\Box\)}
\newenvironment{proofSketch}{\paragraph{\bf Proof Sketch:}}{\hspace*{\fill}\(\Box\)}
\newtheorem{theorem}{Theorem}

\newtheorem{lemma}{Lemma}

\def\noflash#1{\setbox0=\hbox{#1}\hbox to 1\wd0{\hfill}}

\begin{document}
\title{Network-Aware Byzantine Broadcast\\
in Point-to-Point Networks
 using Local Linear Coding
 \footnote{\normalsize This research is supported
in part by Army Research Office grant W-911-NF-0710287 and National
Science Foundation award 1059540. Any opinions, findings, and conclusions or recommendations expressed here are those of the authors and do not
necessarily reflect the views of the funding agencies or the U.S. government.}
}

\author{Guanfeng Liang and Nitin Vaidya\\ \normalsize Department of Electrical and Computer Engineering, and\\ \normalsize Coordinated Science Laboratory\\ \normalsize University of Illinois at Urbana-Champaign\\ \normalsize gliang2@illinois.edu, nhv@illinois.edu}

\date{}
\maketitle



\begin{abstract}
{\normalsize
The goal of Byzantine Broadcast (BB) is to allow a set
of fault-free nodes to agree on information that a source node wants to
broadcast to them, in the presence of Byzantine faulty nodes.
We consider design of efficient
algorithms for BB in {\em synchronous} point-to-point networks, where the rate of transmission over each communication link is limited by its "link capacity". The throughput of a particular BB algorithm is defined as the average number of bits that can be reliably broadcast to all fault-free nodes per unit time using the algorithm without violating the link capacity constraints.
The {\em capacity} of BB in a given network is then defined
as the supremum of all achievable BB throughputs in the given network, over all possible BB algorithms. \\
 
We develop NAB -- a Network-Aware Byzantine broadcast algorithm  -- for arbitrary point-to-point networks consisting of $n$
nodes, wherein the
number of faulty nodes is at most $f$, $f<n/3$, and the network connectivity is at least $2f+1$.
We also prove an upper bound on the capacity of Byzantine broadcast, and conclude that NAB can achieve throughput at least 1/3 of the capacity. When the network satisfies an additional condition, NAB can achieve throughput at least 1/2 of the capacity. \\ 

To the best of our knowledge, NAB is the first algorithm that can achieve a
constant fraction of capacity of Byzantine Broadcast (BB) in arbitrary point-to-point networks. \\
}
\end{abstract}

\thispagestyle{empty}

\newpage

\setcounter{page}{1}

\section{Introduction}
\label{sec:intro}

The problem of Byzantine Broadcast (BB) -- also known as the Byzantine
Generals problem \cite{psl82} --  was introduced by Pease, Shostak and
Lamport in their 1980 paper \cite{psl80}.
Since the first paper on this topic, Byzantine Broadcast has been the subject
of intense research activity, due to its many potential practical
applications, 
including replicated fault-tolerant state machines \cite{PBFT},
and fault-tolerant distributed file storage \cite{dfs}.
Informally, Byzantine Broadcast (BB) can be described as follows (we will define the problem more formally later). There is a source node that needs to
broadcast a message (also called its {\em input}) to all the other nodes such
that even if some of the nodes are {\em Byzantine faulty}, all the fault-free
nodes will still be able to agree on an identical message; the agreed message 
is identical to the source's input if the source is fault-free.

We consider the problem of maximizing the {\em throughput} of  
Byzantine Broadcast (BB) in {\em synchronous} networks of point-to-point links,
wherein each directed communication link is subject to a "capacity" constraint.
Informally speaking, {\em throughput} of BB is the number of bits of
Byzantine Broadcast that can be achieved per unit time (on average), under the worst-case
behavior by the faulty nodes. 
Despite the large body of work on BB \cite{Lynch:EasyImpossibilityProof,opt_bit_Welch92,bit_optimal_89,King:PODC2010,Hirt:linear,ashish:2009},
performance of BB in {\em arbitrary}\,
point-to-point network has not been investigated previously.
When capacities of the different links are not identical,
previously proposed algorithms can perform poorly.
In fact, one can easily construct example networks in which
previously proposed algorithms achieve throughput that is arbitrarily
worse than the optimal throughput.

\paragraph{Our Prior Work:}
In our prior work, we have considered the problem of optimizing
throughput of Byzantine Broadcast in 4-node networks
\cite{infocom-ByzantineBroadcast}. By comparing with an upper bound
on the capacity of BB in 4-node networks, we showed that our 4-node
algorithm is optimal. Unfortunately, the 4-node algorithm does not
yield very useful insights on design of good algorithms for larger networks.
This paper presents an algorithm that uses a different approach than
that in \cite{infocom-ByzantineBroadcast}, and also develops a different
upper bound on capacity that is helpful in our analysis of the new
algorithm. In other related work, we explored design of efficient Byzantine
consensus algorithms when {\em total communication cost}
is the metric (which is oblivious of link capacities) \cite{podc2011_consensus}. 

%
\paragraph{Main contributions:} This  paper studies throughput and capacity of Byzantine broadcast in {\em arbitrary} point-to-point networks.

\begin{enumerate}
\item We develop a Network-Aware Byzantine (NAB) broadcast algorithm for {\em arbitrary} point-to-point networks wherein each directed communication link is subject to a capacity constraint. 
The proposed NAB algorithm is ``network-aware'' in the sense that its
design takes the link capacities into account.

\item We derive an upper bound on the capacity of BB in arbitrary point-to-point networks.

\item We show that NAB can achieve throughput at least 1/3 of the capacity in arbitrary point-to-point networks. When the network satisfies an additional condition, NAB can achieve throughput at least 1/2 of the capacity.



\end{enumerate}


\noindent
We consider a {\em synchronous} system consisting of $n$ nodes,
named $1,2,\cdots,n$, with
one node designated as the {\em sender} or {\em source} node.
In particular, we will assume that node 1 is
the source node. Source node 1 is given an {\em input} value $x$ containing $L$ bits,
and the goal here is for the source to broadcast its input to all the other
nodes. 
The following conditions must be satisfied when the input value at the source
node is $x$:
\begin{itemize}
\item \textbf{Termination:} Every fault-free node $i$ must eventually decide on an {\em output} value of $L$ bits; let us denote the output value of fault-free node $i$ as $y_i$.
\item
\textbf{Agreement:} All fault-free nodes must agree on an identical output value, i.e., there exists $y$ such that $y_i=y$ for each fault-free node $i$.
\item
\textbf{Validity:} If the source node is fault-free, then the agreed value
must be identical to the input value of the source, i.e., $y=x$.
\end{itemize}
\paragraph{Failure Model:} The faulty nodes are controlled by an adversary that has a complete knowledge of the network topology, the algorithm, and the information the source is trying to send. No secret is hidden from the adversary.
The adversary can take over up to  $f$
nodes at any point during execution of the algorithm, where
$f < n/3$. These nodes are said to be {\em faulty}. The faulty nodes can engage in any kind of
deviations from the algorithm, including sending incorrect
or inconsistent messages to the neighbors.

We assume that the set of faulty nodes remains fixed across different instances
of execution of the BB algorithm.
This assumption captures the conditions in practical replicated server
systems. In such a system, the replicas may use Byzantine Broadcast
to agree on requests to be processed. The set of faulty (or compromised)
replicas that may adversely affect the agreement on each request does
{\em not} change arbitrarily. We model this by assuming that the set of
faulty nodes remains fixed over time.

When a faulty node fails to send a message to a neighbor as required by
the algorithm, we assume that the recipient node interprets the
missing message as being some default value.

\paragraph{Network Model:}
We assume a synchronous point-to-point network modeled as a directed simple
graph $\sg(\sv,\se)$, where the set of vertices $\sv=\{1,2,\cdots,n\}$
represents the nodes in the point-to-point network,
and the set of edges $\se$ represents the links in the network. 
 The {\em capacity} of an edge $e\in \se$ is denoted as $z_e$.
With a slight abuse of terminology, we will use the terms {\em edge} and {\em link} interchangeably, and use the terms {\em vertex} and {\em node} interchangeably.
We assume that $n\geq 3f+1$ and that the network connectivity is at least $2f+1$
(these two conditions are necessary for the existence of a correct BB algorithm \cite{Lynch:EasyImpossibilityProof}).

In the given network,
links may not exist between all node pairs. Each directed link is associated with a {\em fixed} link capacity, which specifies the maximum amount of information that can be transmitted on that link per unit time. Specifically, over a directed edge $e=(i,j)$ with capacity $z_e$ bits/unit time,
we assume that up to $z_e\tau$ bits can be reliably sent
from node $i$ to node $j$ over time duration $\tau$ (for
any non-negative $\tau$). 
This is a deterministic model of {\em capacity}\, that has been commonly
used in other work \cite{Li03linearnetwork, Cai06networkerror, Ho04byzantinemodification, Jaggi_infocom07}. 
All link capacities are assumed to be positive integers. Rational link capacities can be turned into integers by choosing a suitable time unit. Irrational link capacities can be approximated by integers with arbitrary accuracy by choosing a
suitably long time unit. Propagation delays on the links are assumed to be zero (relaxing this
assumption does not impact the correctness of results shown for large input sizes).
We also assume that each node correctly knows the identity of the nodes at the other end of its links.


\subsection*{Throughput and Capacity of Byzantine Broadcast}
When defining the throughput of a given BB algorithm in a given network,
we consider $Q\ge 1$ independent instances of BB. The source node is given an
$L$-bit input for each of these $Q$ instances, and the {\em validity}
and {\em agreement} properties need to be satisfied for each instance
{\em separately} (i.e., independent of the outcome for the other instances).

For any BB algorithm $\sa$, denote $t(\sg,L,Q,\sa)$ as the duration of time
required, in the worst case, to complete $Q$ instances of $L$-bit Byzantine Broadcast, without violating the capacity constraints of the links in $\sg$.
Throughput of algorithm $\sa$ in network $\sg$ for $L$-bit inputs is then defined as
\[ T(\sg,L,\sa) = \lim_{Q\rightarrow \infty}\,\frac{LQ}{t(\sg,L,Q,\sa)}\]
We then define capacity $C_{BB}$ as follows.

\mybox{
\vspace{5pt}
Capacity $C_{BB}$ of Byzantine Broadcast in network $\sg$ is defined as the
supremum over the throughput of all algorithms $\sa$ that solve the BB problem
and all values of $L$.
That is,
\begin{equation}
C_{BB}(\sg) ~ = ~ \sup_{\sa, L} \, T(\sg,L,\sa).
\end{equation}
\vspace{-10pt}
}


\section{Algorithm Overview}
\label{sec:overview}

Each instance of our NAB algorithm performs Byzantine broadcast of
an $L$-bit value. We assume that the NAB algorithm is used repeatedly,
and during all these repeated executions, the cumulative number of faulty
nodes is upper bounded by $f$. Due to this assumption,
the algorithm can perform well by amortizing the cost of fault tolerance
over a large number of executions. Larger values of $L$ also result in
better performance for the algorithm. The algorithm is intended to be used for sufficiently large $L$, to be elaborated later. 

The $k$-th instance of NAB executes on a network corresponding to graph
$\sg_k(\sv_k,\se_k)$, defined as follows:
\begin{itemize}
\item For the first instance, $k=1$, and $\sg_1=\sg$. Thus,
$\sv_1=\sv$ and $\se_1=\se$.

\item The $k$-th instance of NAB occurs on graph $\sg_k$ in the
following sense: (i) all the fault-free nodes know the node and
edge sets $\sv_k$ and $\se_k$, (ii)
only the nodes corresponding to the vertices in $\sv_k$ need to participate
in the $k$-th instance of BB, and (iii) only the links corresponding to the
edges in $\se_k$ are used for communication in the $k$-th instance
of NAB (communication received on other links can be ignored).

During the $k$-th instance of NAB using graph $\sg_k$, if misbehavior by some
faulty node(s) is detected, then, as described later, additional information is
gleaned about the potential identity of the faulty node(s). In this case,
$\sg_{k+1}$ is obtained by removing from $\sg_k$ appropriately chosen edges
and possibly some vertices (as described later).

On the other hand, if during the $k$-th instance, no misbehavior is detected,
then $\sg_{k+1}=\sg_k$.

\end{itemize}
The $k$-th instance of NAB algorithm consists of three phases, as described
next. The main contributions of this paper are (i) the algorithm used in Phase
2 below, and (ii) a performance analysis of NAB.

If graph $\sg_k$ does {\bf not} contain the source node 1, then (as will
be clearer later) by the start of the $k$-th instance of NAB, all the fault-free
nodes already know that the source
node is surely faulty;
in this case, the fault-free nodes can agree on a default value for the
output, and terminate the algorithm. Hereafter, we will assume that
the source node 1 is in $\sg_k$.

\newpage

\thispagestyle{empty}

\begin{figure}[!ht]
  \centering
\subfigure[]{\label{fig:original}\includegraphics[width=0.3\textwidth]{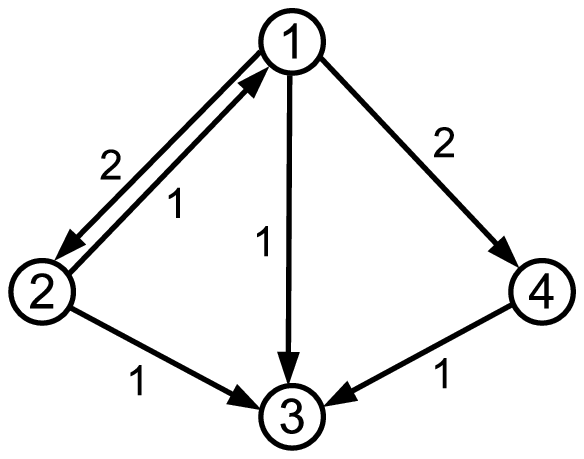}}  
\hspace{0.05\textwidth}  
  \subfigure[]{ \label{fig:new}\includegraphics[width=0.3\textwidth]{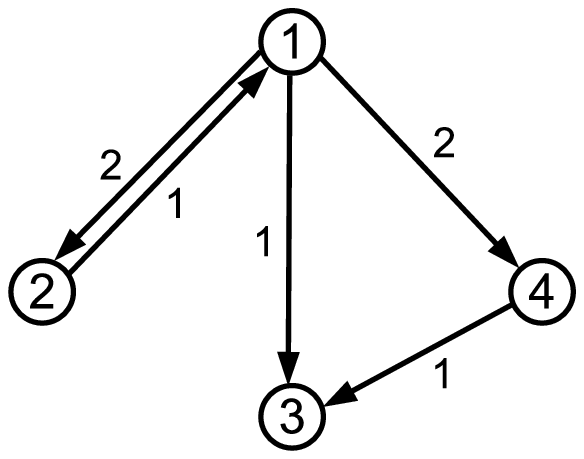}}
\hspace{0.05\textwidth}  
  \caption{Example graphs}
\label{fig:example}
\end{figure}

~

~

\begin{figure}[!ht]
  \centering
  \subfigure[Directed graph $G$]{\label{fig:directed}\includegraphics[width=0.3\textwidth]{directed}}  
\hspace{0.05\textwidth}  
  \subfigure[Undirected graph $\overline{G}$]{\label{fig:undirected}\includegraphics[width=0.3\textwidth]{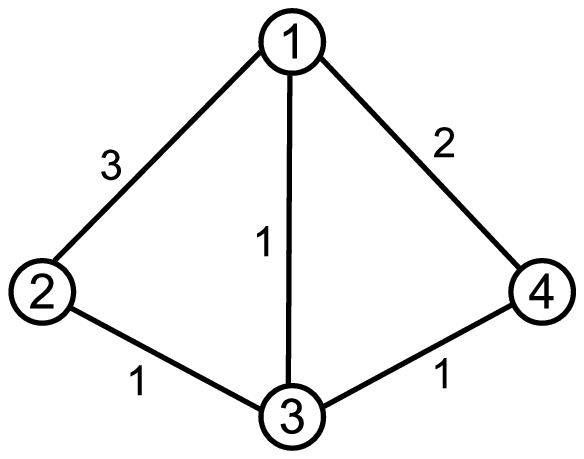}}
\\ 
\subfigure[Two unit-capacity spanning trees in the directed graph. Every directed edge has capacity 1]{ \label{fig:directed-trees}\includegraphics[width=0.3\textwidth]{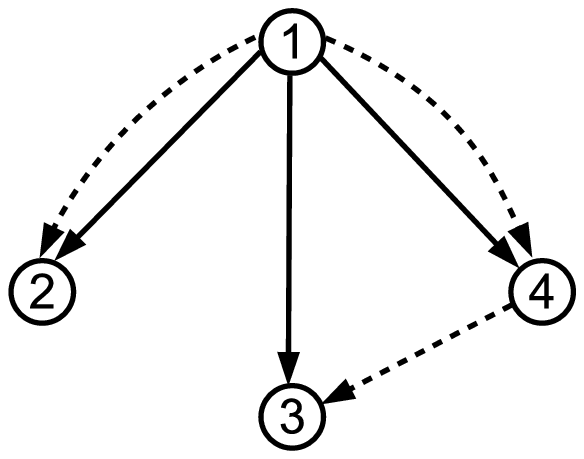}}
 \hspace{0.05\textwidth}
  \subfigure[A spanning tree in the undirected graph shown in dotted edges]{\label{fig:undirected-tree}\includegraphics[width=0.3\textwidth]{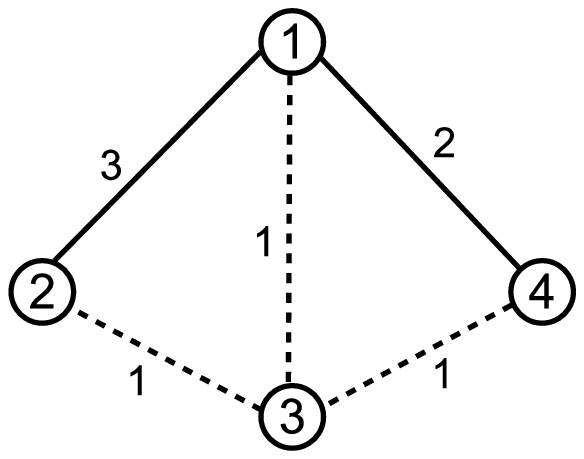}}
 \caption{Different graph representations of a network. Numbers next to the edges indicate link capacities.}
\label{fig:graph:representations}
\end{figure}


%

\newpage

\addtocounter{page}{-1}

\subsection*{Phase 1: Unreliable Broadcast}

In Phase 1, source node 1 broadcasts $L$ bits to all the other nodes in $\sg_k$.
This phase makes no effort to detect or tolerate misbehavior by faulty
nodes.
As elaborated in
Appendix~\ref{app:unreliable_broadcast}, unreliable broadcast
can be performed using a set of spanning trees embedded in graph $\sg_k$.
Now let us analyze the time required to perform unreliable broadcast in Phase 1.

$MINCUT(\sg_k,1,j)$ denotes the minimum cut in the directed graph $\sg_k$
from source node 1 to node $j$.
Let us define
 $$\gamma_k = \min_{j\in \sv_k}~MINCUT(\sg_k,1,j).$$
$MINCUT(\sg_k,1,j)$ is equal
to the maximum flow rate possible from node 1 to node $j\in \sv_k$. It is well-known \cite{maxflowmincut} that 
$\gamma_k$ is 
the maximum rate achievable for unreliable broadcast from node 1 to {\em all} the other nodes in $\sv_k$, under the
capacity constraints on the links in $\se_k$.
Thus, the least amount of time in which
$L$ bits can be broadcast by node 1 in graph $\sg_k$ is given by\footnote{\normalsize To
simplify the analysis, we ignore propagation delays. Analogous  results on throughput and capacity can
be obtained in the presence of propagation delays as well.}
\begin{eqnarray}
& L~/~\gamma_k & \label{e_phase1_time}
\end{eqnarray}
Clearly, $\gamma_k$ depends on the capacities of the links in $\sg_k$.
For example, if $\sg_k$ were the directed graph in Figure \ref{fig:original}, 
then $MINCUT(\sg_k,1,2) = MINCUT(\sg_k,1,4)=2$, $MINCUT(\sg_k,1,3)=3$, and
hence $\gamma_k = 2$.

At the end of the broadcast operation in Phase 1, each node should have received $L$ bits.
At the end of Phase 1 of the $k$-th instance of NAB,
one of the following four outcomes will occur:
\begin{list}{}{}
\item[(i)]
The source node 1 is fault-free, and all the fault-free nodes correctly receive 
the source node's $L$-bit
input for the $k$-th instance of NAB, or
\item[(ii)] The source node 1 is fault-free, but
some of the fault-free nodes receive incorrect $L$-bit values due to misbehavior
by some faulty node(s), or
\item[(iii)] The source node 1 is faulty, but all the
fault-free nodes still receive an identical $L$-bit value in Phase 1, or
\item[(iv)] The source node is faulty, and all the fault-free nodes
do not receive an identical $L$-bit value in Phase 1.
\end{list}
The values received by the fault-free nodes in cases (i) and (iii) satisfy the
{\em agreement} and {\em validity} conditions,
 whereas in cases (ii) and (iv) at least one of the two conditions
is violated.

\subsection*{Phase 2: Failure Detection}

\comment{=================================================
Outcome from Phase 2 will be one of the following:
\begin{itemize}
\item (F1) The fault-free nodes agree on the $L$-bit values they received in Phase 1,
	and the agreed values satisfy the agreement and validity conditions.
	In this case, the $k$-th instance of NAB ends at the ends of Phase 2 
	({\bf without performing Phase 3}).
\item (F2) All the fault-free nodes become aware of misbehavior by some faulty node(s).
This misbehavior may have occurred during Phase 1 or Phase 2.
In this case, {\bf Phase 3 is performed}.
Note that even though the fault-free nodes may become aware of the misbehavior,
they may not necessarily know the identity of the faulty nodes.
\end{itemize}
The algorithm for Phase 2, presented in Section~\ref{s_phase_2},
and its analysis, are our main contributions.
====================================================}

Phase 2 performs the following two operations. As stipulated in
the fault model, a faulty node may not follow the algorithm
specification correctly.
\begin{itemize}
\item (Step 2.1) {\em Equality check:}
Using an {\em Equality Check} algorithm,
the nodes in $\sv_k$ perform a comparison of the $L$-bit value
	they received in Phase 1, to determine
if all the nodes received an identical value. The source node 1 also 
	participates in this comparison operation (treating its input
	as the value ``received from'' itself). 

Section \ref{sec:eq} presents the {\em Equality Check} algorithm,
which is designed to {\bf guarantee} that if the values received
by the fault-free nodes in Phase 1 are {\bf not identical}, then at
least one fault-free node will detect the {\bf mismatch}. 

\item (Step 2.2) {\em Agreeing on the outcome of equality check:}
Using a previously proposed Byzantine broadcast algorithm, such as
\cite{psl80}, each node performs Byzantine broadcast of
 a 1-bit {\em flag} to other nodes in $\sg_k$ indicating whether
it detected a mismatch during {\em Equality Check}.
\end{itemize}

If any node broadcasts in step 2.2 that it has detected a mismatch,
then subsequently {\bf Phase 3 is performed}.
On the other hand, if no node announces a mismatch in step 2.2 above,
then {\bf Phase 3 is not performed}; in this case,
each fault-free node agrees on the value it received in Phase 1,
and the $k$-th instance of {\bf NAB is completed}.

We will later prove that, when Phase 3 is {\bf not} performed,
the values agreed above by the fault-free nodes satisfy the
{\em validity} and {\em agreement} conditions for the $k$-th
instance of NAB.
On the other hand, when Phase 3 is performed during the $k$-th instance of NAB, as noted below,
Phase 3 results in correct outcome for the $k$-th instance.

When Phase 3 is performed, Phase 3 determines $\sg_{k+1}$.
Otherwise, $\sg_{k+1}=\sg_k$.

\subsection*{Phase 3: Dispute Control}

Phase 3 employs a {\em dispute control}\, mechanism that has also been
used in prior work \cite{Hirt06DisputeControl,podc2011_consensus}.
Appendix~\ref{app:diagnosis} provides the details of
the dispute control algorithm used in Phase 3.
Here we summarize the outcomes of this phase -- this summary should
suffice for understanding the main contributions of this paper.

The dispute control in Phase 3 has very high overhead, due to the large amount of data that needs to be transmitted. From the above discussion of Phase 2,
it follows that Phase 3 is performed {\bf only if}
at least one faulty node misbehaves during Phases 1 or 2.
The outcomes from Phase 3 performed during the $k$-th
instance of NAB are as follows.
\begin{itemize}
\item Phase 3 results in correct Byzantine broadcast for the
$k$-th instance of NAB.
This is obtained as a byproduct of the Dispute Control mechanism.

\item By the end of Phase 3, either
one of the nodes in $\sv_k$ is correctly identified
as faulty, or/and at least one pair of nodes in $\sv_k$, say nodes $a,b$,
is identified as being ``in dispute'' with each other. When a node pair $a,b$
is found {\em in dispute}, it is guaranteed that
(i) {\em at least}\, one of these two nodes is faulty,
and (ii) at least one of the directed edges $(a,b)$ and $(b,a)$ is in $\se_k$.
Note that
the dispute control phase {\bf never} finds two fault-free nodes
in dispute with each other.

\item Phase 3 in the $k$-th instance computes graph $\sg_{k+1}$.
In particular, any nodes that can be inferred as being faulty based
on their behavior so far are excluded from $\sv_{k+1}$; links
attached to such nodes are excluded from $\se_{k+1}$. In Appendix \ref{app:diagnosis}
we elaborate on how the faulty nodes are identified. Then,
for each node pair in $\sv_{k+1}$, if that node
pair has been found in dispute at least in one instance of NAB so far,
the links between the node pair are excluded from $\se_{k+1}$.
Phase 3 ensures that
all the fault-free nodes compute an identical graph $\sg_{k+1}=(\sv_{k+1},\se_{k+1})$
to be used during the next instance of NAB.

%
\end{itemize}
Consider two special cases for the $k$-th instance of NAB:
\begin{itemize}
\item If graph $\sg_k$ does not contain the source node 1, it implies that all the
fault-free nodes are aware that node 1 is faulty. In this case, they can safely agree
on a default value as the outcome for the $k$-th instance of NAB.
\item 
Similarly, if the source node is in $\sg_k$ but at least $f$ other nodes
are excluded from $\sg_k$, that implies that the remaining nodes in $\sg_k$
are all fault-free; in this case, algorithm NAB can be reduced to just Phase 1.

\end{itemize}

Observe that during each execution of Phase 3, either a new pair of nodes
{\em in dispute}\, is identified, or a new node is identified as faulty.
Once a node is found to be in dispute
with $f+1$ distinct nodes, it can be identified as faulty, and excluded from the algorithm's
execution.
Therefore, Dispute Control 
needs to be performed at most $f(f+1)$ times over repeated executions of NAB. Thus, even though each dispute control phase is expensive, the bounded number ensures that the amortized cost over a large number of instances of NAB is small, as reflected in the performance analysis of NAB (in Section \ref{sec:throughput} and Appendix \ref{app:BB:throughput}).

\comment{================================================
\section{Phase 2: Failure Detection}
\label{s_phase_2}

Phase 2 performs the following two operations. As stipulated in
the fault model, a faulty node may not follow the algorithm
specification correctly.
\begin{itemize}
\item (Step 2.1) {\em Equality check:}
The nodes in $\sv_k$ perform a comparison of the $L$-bit value
	they received in Phase 1, to determine
if all the nodes received an identical value. The source node 1 also 
	participates in this comparison operation (treating its input
	as the value ``received from'' itself). 
Section \ref{sec:eq} presents the algorithm for {\em Equality Check}.

The algorithm used for comparison {\bf guarantees} that if the values received
by the fault-free nodes in Phase 1 are {\bf not identical}, at least one fault-free
node will detect the {\bf mismatch}. 

\item (Step 2.2) {\em Agreeing on the outcome of equality check:}
Using a previously proposed Byzantine broadcast algorithm, such as
\cite{psl80}, each node performs Byzantine broadcast of
 a 1-bit {\em flag} to other nodes in $\sg_k$ indicating whether
it detected a mismatch during {\em Equality Check}.
If any node thus announces that it detected a mismatch,
then subsequently {\bf Phase 3 is performed}; otherwise,
each fault-free node agrees on the value it received in Phase 1, and the
$k$-th instance of NAB is completed.
\end{itemize}
=============================================================}

\vspace{-5pt}
\section{Equality Check Algorithm with Parameter $\rho_k$}
\label{sec:eq}

We now present the {\em Equality Check} algorithm used in Phase 2, which has an integer parameter
$\rho_k$ for the $k$-th instance of NAB. Later in this section, we will elaborate
on the choice of $\rho_k$, which is dependent on capacities of the links in $\sg_k$.

Let us denote by $x_i$ the $L$-bit value received by fault-free node $i\in \sv_k$ in
Phase 1 of the $k$-th instance. For simplicity, we do not include index
$k$ in the notation $x_i$. To simplify the presentation, let us assume that
$L/\rho_k$ is an integer. Thus we can represent the $L$-bit value $x_i$ as 
$\rho_k$ symbols from Galois Field $GF(2^{L/\rho_k})$.
In particular, we represent $x_i$ as a vector $\bf X_i$,
\vspace{-5pt}$$ {\bf X_i} = [X_i(1), \, X_i(2),\cdots,X_i(\rho_k)] \vspace{-5pt}$$
where each symbol $X_i(j)\in GF(2^{L/\rho_k})$ can be represented using $L/\rho_k$ bits.
As discussed earlier, for convenience, we assume that all the link capacities are integers when using a suitable time unit.

\begin{algorithm}[h]
\caption{Equality Check in $\sg_k$ with parameter $\rho_k$}
\label{alg:MEQ:new}
Each node $i\in\sv_k$ should performs these steps:
\begin{enumerate}
\item On each outgoing link $e=(i,j)\in\se_k$ whose capacity is $z_e$, node $i$ transmits
$z_e$ linear combinations of the $\rho_k$ symbols in vector
$\bf X_i$, with the weights for the linear combinations being chosen from $GF(2^{L/\rho_k})$.

More formally,
for {\em each} outgoing edge $e = (i,j)\in \se_k$ of capacity $z_e$, a 
$\rho_k \times z_e  $ matrix $\bf C_e$ is specified as a part of the
algorithm. Entries in $\bf C_e$ are chosen from $GF(2^{L/\rho_k})$.
Node $i$ sends to node $j$ a vector $\bf Y_e$ of $z_e$ symbols obtained
as the matrix product ${\bf Y_e} = {\bf X_i C_e}$.
Each element of $\bf Y_e$ is said to be a ``coded symbol''.
The choice of the matrix $\bf C_e$ affects the correctness of the algorithm,
as elaborated later.

\item On each incoming edge $d=(j,i)\in \se_k$, node $i$ receives a vector $\bf Y_d$ containing $z_d$ symbols from $GF(2^{L/\rho_k})$. Node $i$ then checks, for each incoming edge $d$,
whether ${\bf Y_d}={\bf X_i C_d }$. The check is said to fail iff ${\bf Y_d} \neq {\bf X_i C_d }$.

\item If checks of symbols received on any incoming edge fail in the previous step,
then node $i$ sets a 1-bit {\em flag} equal to MISMATCH; else the {\em flag} is set to NULL.
This flag is broadcast in {\em Step 2.2} above.
\end{enumerate}
\end{algorithm}

In the {\em Equality Check} algorithm,
$z_e$ symbols of size $L/\rho_k$ bits are transmitted on
each link $e$ of capacity $z_e$. Therefore, the {\em Equality Check} algorithm
requires time duration 
\begin{eqnarray}
&L~/~\rho_k& \label{e_phase3_time}
\end{eqnarray}

\subsubsection*{Salient Feature of Equality Check Algorithm}

In the {\em Equality Check} algorithm, a single round of communication occurs between
adjacent nodes. No node is required to forward packets received from other
nodes during the {\em Equality Check} algorithm.
This implies that, while a faulty node may send incorrect packets to its neighbors,
it cannot tamper information sent between fault-free nodes.
This feature of {\em Equality Check} is important in being able to prove
its correctness despite the presence of faulty nodes in $\sg_k$.

\subsubsection*{Choice of Parameter $\rho_k$}


We define a set $\Omega_k$ as follows using the disputes identified through
the first $(k-1)$ instances of NAB.
\begin{eqnarray*}
\Omega_k & = & \{ ~ H~~|~~\mbox{$H$ is a subgraph of $\sg_k$ containing $(n-f)$ nodes
such that no two nodes in $H$} \\
&&\mbox{~~~~~~~~~~~ have been found {\em in dispute} through the first
	$(k-1)$ instances of NAB} ~\}
\end{eqnarray*}
As noted in the discussion of Phase 3 (Dispute Control),
fault-free nodes are never found in dispute {\em with
each other} (fault-free nodes may be found in dispute with faulty
nodes, however). This implies that $\sg_k$ includes
all the fault-free nodes, since a fault-free node will
never be found in dispute with $f+1$ other nodes.
There are at least $n-f$ fault-free
nodes in the network. This implies that set $\Omega_k$
is non-empty.

Corresponding to a directed graph $H(V,E)$, let 
us define an undirected graph $\overline H(V,\overline E)$
as follows: (i) both $H$ and $\overline H$ contain the same set
of vertices, (ii) undirected edge $(i,j) \in \overline E$ if
either $(i,j)\in E$ or $(j,i)\in E$, and (iii) capacity of undirected
edge $(i,j)\in\overline E$ is defined to be equal to the sum
of the capacities of directed links $(i,j)$ and $(j,i)$ in $E$ (if
a directed link does not exist in $E$, here we treat its capacity as 0).
For example, Figure \ref{fig:undirected} shows the undirected graph
corresponding to the directed graph in Figure \ref{fig:directed}.

Define a set of {\em undirected} graphs $\overline\Omega_k$ as follows. $\overline\Omega_k$ contains undirected version of each directed graph in $\Omega$.
\[
\overline\Omega_k ~ = ~ \{~~ \overline H ~~ | ~~ H\in \Omega_k ~\}
\]
Define $U_k= \min_{\overline{H}\in \overline{\Omega}_k}\min_{i,j \in\overline{H}}MINCUT(\overline{H},i,j)$ as the minimum value of the MINCUTs between all
 pairs of nodes in
all the undirected graphs in the set $\overline\Omega_k$.
For instance, suppose that $n=4$, $f=1$ and
the graph shown in Figure \ref{fig:original} is $\sg$,
whereas $\sg_k$ is the graph shown in 
Figure \ref{fig:new}. Thus, nodes 2 and 3 have been found in dispute
previously.
Then, $\Omega_k$ and $\overline{\Omega}_k$ each contain two subgraphs,
one subgraph corresponding to the node set $\{1,2,4\}$, 
and the other subgraph corresponding to the node set
$\{1,3,4\}$. In this example, $U_k = 2$. Also notice that in this example, there is no edge between nodes 2 and 4 in $\sg$ to begin with -- so these two nodes will never be found in dispute.

Parameter $\rho_k$ is chosen such that
\[
\rho_k~\leq~ \frac{ U_k}{2}
\]
Under the above constraint on $\rho_k$,
as per (\ref{e_phase3_time}), execution time of {\em Equality Check}
is minimized when $\rho_k=\frac{U_k}{2}$.
Under the above constraint on $\rho_k$, we will prove the correctness of 
the {\em Equality Check} algorithm, with its execution time being $L/\rho_k$.

\subsection{Correctness of the Equality Check Algorithm}
\label{subsec:correctness}
The correctness of Algorithm \ref{alg:MEQ:new} depends on the choices of the parameter $\rho_k$ and the set of coding matrices $\{{\bf C_e}|e\in \se_k\}$. Let us say that a set of coding matrices is {\em correct} if the resulting {\em Equality Check} Algorithm \ref{alg:MEQ:new} satisfies the following requirement:
\begin{itemize}
\item (EC)~~
{\bf if}~~
there exists a pair of fault-free nodes $i,j\in \sg_k$ such
that $\bf X_i\neq X_j$ (i.e., $x_i\neq x_j$), \\
{\bf then}~~
 the 1-bit flag at {\bf at least one} fault-free node is set to MISMATCH.
\end{itemize}
Recall that $\bf X_i$ is a vector representation of the $L$-bit value $x_i$ received by node $i$ in Phase 1 of NAB. Two consequences of the above correctness condition are:
\begin{itemize}
\item If some node (possibly the source node) misbehaves during Phase 1 leading
to outcomes (ii) or (iv) for Phase 1, then at least one fault-free node will
set its flag to MISMATCH. In this case, the fault-free nodes (possibly
including the sender) do not share identical $L$-bit values $\bf X_i$'s as the outcome of Phase 1. 

\item If no misbehavior occurs in Phase 1 (thus the
values received by fault-free nodes in Phase 1 are correct),
but MISMATCH flag at some fault-free node is set in {\em Equality Check},
then misbehavior must have occurred in Phase 2.

\end{itemize}

The following theorem shows that when $\rho_k\le U_k/2$, and when $L$ is sufficiently large, there exists a set coding matrices $\{{\bf C_e}|e\in \se_k\}$ that are correct.

\begin{theorem}
\label{thm:fault-detection}
For $\rho_k \le U_k/2$, when the entries of the coding matrices
 $\{{\bf C_e}|e\in \se_k\}$ in step 1 of Algorithm \ref{alg:MEQ:new} are
 chosen independently and uniformly at random from $GF(2^{L/\rho_k})$,
then $\{{\bf C_e}|e\in \se_k\}$ is correct
 with probability 
$
\geq  ~
1 - 2^{-L/\rho_k} \left[
{n\choose n-f}(n-f-1)\rho_k
\right]. 
$
Note that when $L$ is large enough,  $1 - 2^{-L/\rho_k} \left[
{n\choose n-f}(n-f-1)\rho_k
\right]>0$.
\end{theorem}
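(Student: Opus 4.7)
Let $q = 2^{L/\rho_k}$, and let $F$ denote the set of actually fault-free nodes, so $|F| \ge n - f$. The first step will be to reduce correctness to a purely linear-algebraic event. A faulty sender on an edge $e=(a,b)$ can always choose its transmitted coded symbols to equal ${\bf X_b C_e}$, the value the fault-free receiver $b$ computes from its own Phase-1 value, so any cross-check on an edge from a faulty sender to a fault-free receiver can be made to pass. Consequently, if the set $\{{\bf C_e}\}$ fails to be correct, then there must exist an assignment $\{{\bf X_a}\}_{a\in F}$ with ${\bf X_i}\ne{\bf X_j}$ for some $i,j\in F$ and satisfying
\[
({\bf X_a}-{\bf X_b}){\bf C_e} = {\bf 0} \qquad \text{for every } e=(a,b)\in\se_k \text{ with both } a,b\in F.
\]

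Because no two fault-free nodes are ever placed in dispute with one another, $F$ contains a subset $F^*$ of size exactly $n-f$ for which $\sg_k[F^*]\in\Omega_k$; thus $\overline{\sg_k[F^*]}\in\overline{\Omega}_k$ and $MINCUT(\overline{\sg_k[F^*]},i,j)\ge U_k\ge 2\rho_k$ for every pair $i,j\in F^*$. Since the identity of the fault-free set is a priori unknown, I will union-bound over all $\binom{n}{n-f}$ size-$(n-f)$ candidate subsets $S\subseteq\sv_k$. Inside each candidate $S$, fix an arbitrary reference vertex $r\in S$; by translating the assignment (${\bf X_a}\mapsto{\bf X_a}-{\bf X_r}$) I may assume ${\bf X_r}={\bf 0}$, and then any non-constant satisfying assignment has ${\bf X_a}\ne{\bf 0}$ for some $a\in S\setminus\{r\}$. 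A further union bound over these $(n-f-1)$ choices of $a$ reduces the task to proving, for every ordered pair $(r,a)$,
\[
\Pr\bigl(\exists\,{\bf v}\ne{\bf 0} \text{ and an assignment on } S \text{ with } {\bf X_r}={\bf 0},\ {\bf X_a}={\bf v},\ \text{all constraints satisfied}\bigr) ~\le~ \rho_k/q.
\]

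This is where the hypothesis $\rho_k\le U_k/2$ is used. By Menger's theorem, there exist $2\rho_k$ edge-disjoint undirected paths from $r$ to $a$ in $\overline{\sg_k[S]}$; I will assume each carries one unit of flow and, whenever it traverses an edge $e$, uses a distinct column of ${\bf C_e}$, so that the random entries consulted by different unit paths come from disjoint (and hence independent) portions of $\{{\bf C_e}\}$. Along any pair of these unit paths, the per-edge scalar constraints $({\bf X_b}-{\bf X_c})\cdot{\bf c}=0$ form a cycle through $r$ and $a$, and eliminating the intermediate values ${\bf X_b}$ produces a single linear equation ${\bf v}\cdot{\bf w_i}=0$ whose coefficient vector ${\bf w_i}\in GF(q)^{\rho_k}$ is a polynomial expression in the independent random entries referenced by that pair of paths. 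Grouping the $2\rho_k$ unit paths into $\rho_k$ pairs yields $\rho_k$ mutually independent and uniformly distributed vectors ${\bf w_1},\ldots,{\bf w_{\rho_k}}$. The event that some ${\bf v}\ne{\bf 0}$ satisfies all $\rho_k$ derived equations is precisely the event that the $\rho_k\times \rho_k$ matrix with rows ${\bf w_i}$ is singular, which by the standard count has probability
\[
1-\prod_{j=1}^{\rho_k}(1-q^{-j}) ~\le~ \sum_{j=1}^{\rho_k}q^{-j} ~\le~ \rho_k/q.
\]
Multiplying the three union-bound factors then yields the claimed bound $2^{-L/\rho_k}\binom{n}{n-f}(n-f-1)\rho_k$ on the failure probability, and positivity of $1$ minus this expression for large $L$ is immediate.

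The main obstacle is the last step: producing $\rho_k$ derived coefficient vectors ${\bf w_i}$ that are genuinely statistically independent \emph{and} uniformly distributed over $GF(q)^{\rho_k}$. This demands careful bookkeeping of which column of each ${\bf C_e}$ contributes to which derived equation; the unit-flow/column-assignment device above is the natural means, but one must verify rigorously that elimination of the intermediate ${\bf X_b}$'s along each cycle really does leave a uniformly random vector (rather than one supported on some nontrivial algebraic subset of $GF(q)^{\rho_k}$). Handling this invariance under the linear changes of variable induced by rerouting the internal ${\bf X_b}$'s is the delicate algebraic piece.
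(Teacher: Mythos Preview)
Your reduction to a linear-algebraic event and the union bound over candidate $(n-f)$-subsets $S$ (and then over $a\in S\setminus\{r\}$) are fine and mirror the paper's strategy. The gap is in the step you yourself flag as ``delicate'': the claim that, after pairing the $2\rho_k$ edge-disjoint $r$--$a$ paths into $\rho_k$ cycles, eliminating the intermediate $\mathbf{X}_b$'s along each cycle leaves a single linear equation $\mathbf{v}\cdot\mathbf{w}_i=0$ with $\mathbf{w}_i$ uniform on $GF(q)^{\rho_k}$. This is not merely delicate; in general it is false. Consider a pair of paths that are internally node-disjoint and together form a simple cycle with $m$ edges and $m-2$ intermediate nodes. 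You then have $m$ scalar constraints but $(m-2)\rho_k$ free scalar variables coming from the intermediate $\mathbf{X}_b$'s. As soon as $(m-2)\rho_k\ge m$ --- e.g.\ $\rho_k=2$ and $m\ge 4$ --- the intermediate variables generically absorb \emph{all} $m$ constraints, so for \emph{every} $\mathbf{v}$ one can complete the cycle to a satisfying assignment. In other words the derived functional is $\mathbf{w}_i\equiv 0$, not a uniform vector, and that pair of paths contributes no constraint on $\mathbf{v}$ at all. Your argument only goes through when $\rho_k=1$ (where each edge constraint forces equality of endpoints), and it is not clear how to choose the $2\rho_k$ Menger paths so that every pair shares enough intermediate nodes to avoid this collapse.

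The paper avoids this difficulty by working with spanning structures rather than $r$--$a$ paths. It uses the Nash--Williams/Tutte spanning-tree packing theorem: since every cut of $\overline{H}$ has capacity at least $U_k\ge 2\rho_k$, one can pack $\rho_k$ edge-disjoint unit spanning trees in $\overline{H}$. Each spanning tree contributes $n-f-1$ columns (one per tree edge), and stacking the $\rho_k$ trees gives a square $(n-f-1)\rho_k\times(n-f-1)\rho_k$ submatrix $\mathbf{M}_H$ of the global constraint matrix. Because a spanning tree touches \emph{all} $n-f$ nodes, there are no free intermediate variables to eliminate: invertibility of $\mathbf{M}_H$ directly forces $\mathbf{D}_H=0$. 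The paper proves $\mathbf{M}_H$ is invertible with probability at least $(1-(n-f-1)/q)^{\rho_k}$ by an inductive block argument (adding one spanning tree at a time and using a Schur-complement/Schwartz--Zippel step), and then handles all $H\in\Omega_k$ simultaneously via the degree of the product determinant. The structural point you are missing is that Menger paths between a fixed pair $r,a$ do not control the other $n-f-2$ node values, whereas spanning trees do; that is why the paper's choice of witness submatrix is essential.
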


\begin{proofSketch}
The detailed proof is presented in Appendix \ref{app:fault-detection}.
Here we provide a sketch of the proof.
%
%
The goal is to prove that property (EC) above holds
with a non-zero probability. That is,
regardless of which (up to $f$) nodes in $\sg$ are faulty,
 when $\bf X_i\neq X_j$ for some pair of fault-free nodes $i$ and $j$
in $\sg_k$ during the $k$-th instance,
at least one fault-free node (which may be different from nodes $i$ and $j$)
will set its 1-bit flag to MISMATCH.
To prove
this, we consider every subgraph of $H\in \Omega_k$ (see definition of $\Omega_k$ above). By definition of $\Omega_k$, no two nodes in $H$ have been found in dispute through the first $(k-1)$ instances of NAB. Therefore, $H$ represents one {\em potential} set of $n-f$ fault-free nodes in $\sg_k$. For each edge $e=(i,j)$ in $H$, steps 1-2 of Algorithm \ref{alg:MEQ:new} together have the effect of  checking whether or not  $\bf (X_i -X_j)  C_e = 0$. Without loss of generality, for the purpose of this proof, {\em rename} the nodes in $H$ as $1,\cdots, n-f$. Denote $\bf D_i = X_i - X_{n-f}$ for $i=1,\cdots, (n-f-1)$, then 
\begin{equation}
\bf (X_i -X_j)  C_e = 0 \Leftrightarrow 
\left\{
\begin{matrix}
\bf (D_i - D_j) C_e = 0 &, & \mbox{~if~}~i,j< n-f;\\
\bf D_i C_e = 0 &, & \mbox{~if~}~j=n-f;\\
\bf -D_j C_e = 0 &, & \mbox{~if~}~i = n-f.
\end{matrix}
\right.
\end{equation}

Define $\bf D_H = [D_1, D_2,\cdots,D_{n-f-1}]$.
Let $m$ be the sum of the capacities of all the directed edges in $H$.
As elaborated in Appendix \ref{app:fault-detection},  we define $\bf C_H$ to be 
a $ (n-f-1)\rho_k \times m$ matrix whose entries are obtained using the 
elements of $\bf C_e$ for each edge $e$ in $H$ in an appropriate manner.
For the suitably defined $\bf C_H$ matrix, we can show that 
the comparisons in steps 1-2 of Algorithm \ref{alg:MEQ:new} at all the
nodes in $H\in \Omega_k$ are equivalent to checking whether or not 
\begin{equation}
\bf D_H \, C_H ~ = ~ 0.
\label{eq:compare-matrix}
\end{equation}
We can show that for a particular subgraph $H\in\Omega_k$, when $\rho_k\le U_k/2$, $m\ge (n-f-1) \rho_k$; and when the set of coding matrices $\{{\bf C_e}|e\in \se_k\}$ are generated as described in Theorem \ref{thm:fault-detection},  for large enough $L$, with non-zero probability $\bf C_H$ contains a $(n-f-1)\rho_k \times (n-f-1)\rho_k$ square submatrix that is invertible. In this case $\bf D_H C_H=0$ if and only if $\bf D_H=0$, i.e., $\bf X_1=X_2=\cdots=X_{n-f}$. In other words, if all nodes in subgraph $H$ are fault-free, and $\bf X_i\neq X_j$ for two fault-free nodes $i,j$, then $\bf D_H C_H\neq 0$ and hence the check in step 2 of Algorithm \ref{alg:MEQ:new} fails at some fault-free node in $H$. 

We can then show that, for large enough $L$,
with a non-zero probability,  this is also
 {\bf simultaneously} true for all subgraphs $H\in\Omega_k$.
This implies that, for large enough $L$,
correct coding matrices ($\bf C_e$ for each $e\in\se_k$) can be found.
These coding matrices are specified as a part of the algorithm
specification. Further details of the proof are in Appendix \ref{app:fault-detection}.
\end{proofSketch}

\section{Correctness of NAB}

For Phase 1 (Unreliable Broadcast) and Phase 3 (Dispute Control),
the proof that the outcomes claimed in Section \ref{sec:overview}
indeed occur follows directly from the prior literature cited in
Section \ref{sec:overview} (and elaborated in Appendices \ref{app:unreliable_broadcast} and \ref{app:diagnosis}). Now consider two cases:
\begin{itemize}
\item
The values received by the
fault-free nodes in Phase 1 are {\em not identical}:  Then the
correctness of {\em Equality Check} ensures that a fault-free
node will detect the mismatch, and consequently Phase 3 will
be performed. As a byproduct of {\em Dispute Control} in Phase 3,
the fault-free nodes will correctly agree on a value that satisfies
the {\em validity} and {\em agreement} conditions.

\item The values received by the fault-free nodes in Phase 1 are identical:
If no node announces a mismatch in step 2.2, then
the fault-free nodes will agree on the value received in Phase 1. It is
easy to see that this is a correct outcome. On the other hand, if
some (faulty) node announces a mismatch in step 2.2, then {\em Dispute
Control} will be performed, which will result in correct outcome
for the broadcast of the $k$-th instance.
\end{itemize}
Thus, in all cases, NAB will lead to correct outcome in each instance.

\section{Throughput of NAB and Capacity of BB}
\label{sec:throughput}

\subsection{A Lower Bound on Throughput of NAB for Large $Q$ and $L$}
\label{subsec:throughput_NAB}

In this section, we provide the intuition behind the derivation of the
lower bound. More detail is presented in
Appendix \ref{app:BB:throughput}.
We prove the lower bound when the number of instances $Q$
and input size $L$ for each instance are both
``large'' (in an order sense) compared to $n$.
Two consequences of $L$ and $Q$ being large:
\begin{itemize}
\item As a consequence of $Q$ being large,
the average overhead of {\em Dispute control} per instance of NAB becomes
negligible. Recall that
{\em Dispute Control} needs to be performed at most $f(f+1)$ times over
$Q$ executions of NAB.

\item As a consequence of $L$ being large, the
overhead of 1-bit broadcasts performed in step 2.2 of Phase 2
becomes negligible when amortized over the $L$ bits being broadcast
by the source in each instance of NAB.
\end{itemize}
It then suffices to consider only the time it takes to complete the
{\em Unreliable broadcast} in Phase 1 and {\em Equality Check} in Phase 2. 
For the $k$-th instance of NAB, as discussed previously, the unreliable broadcast in Phase 1 can be done in $L/\gamma_k$ time units (see definition of $\gamma_k$ in section \ref{sec:overview}.). We now define 
\begin{eqnarray*}
\Gamma & = & \{ ~ H~~|~~\mbox{$H$ is a subgraph of $\sg$ containing source node 1,} \\
&&\mbox{~~~~~~~~~~~ and $\sg_k$ may equal $J$ in some execution of NAB for some $k$} ~\}\end{eqnarray*} 
Appendix \ref{app:subgraph} provides a systematic construction of the set $\Gamma$. Define the minimum value of all possible $\gamma_k$:
$$
\gamma^* = \min_{\sg_k\in \Gamma}\gamma_k=\min_{\sg_k\in\Gamma}~\min_{j\in \sv_k}~MINCUT(\sg_k,1,j).
$$ 
Then an upper bound of the execution time of Phase 1 in all instances of NAB is $L/\gamma^*$.

With parameter $\rho_k= U_k/2$, the execution time of the Equality Check in Phase 2 is $L/\rho_k$. Recall that $U_k$ is defined as the minimum value of the MINCUTs between all pairs of nodes in all undirected graphs in the set $\overline\Omega_k$. As discussed in Appendix \ref{app:BB:all_invertible}, $\overline{\Omega}_k\subseteq \overline{\Omega}_1$, where $\sg_1 = \sg$. Hence $U_k\ge U_1$ in all possible $\sg_k$. Define 
$$
\rho^* = \frac{U_1}{2} = \min_{\overline{H}\in \Omega_1} \min_{nodes~i,j ~in~\overline{H}}MINCUT(\overline{H},i,j).
$$
Then $\rho_k\ge \rho^*$ for all possible $\sg_k$ and the execution time of the Equality Check is upper-bounded by $L/\rho^*$.
So the throughput of NAB for large $Q$ and $L$ can be lower bounded
by\footnote{\normalsize To simplify the analysis above, we ignored propagation delays.
Appendix \ref{app:BB:throughput} describes how to achieve this bound
even when propagation delays are considered.}
\begin{equation}
\lim_{L\rightarrow \infty} T(\sg,L,NAB)\ge \frac{L}{L/\gamma^*+L/\rho^*} = \frac{\gamma^*\rho^*}{\gamma^*+\rho^*}.
\label{eq:throughput}
\end{equation}


\subsection{An Upper Bound on Capacity of BB}
\label{sec:upper}

\begin{theorem}
\label{thm:BB:bound}
In any point-to-point network $\sg(\sv,\se)$, the capacity of Byzantine broadcast ($C_{BB}$) with node 1 as the source
satisfies the following upper bound
$$C_{BB}(\sg) \le \min(\gamma^*,2\rho^*).$$
\end{theorem}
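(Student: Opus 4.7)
The plan is to establish the two inequalities $C_{BB}(\sg)\le\gamma^*$ and $C_{BB}(\sg)\le 2\rho^*$ separately; their conjunction is the theorem. Both proofs proceed by constructing an adversarial strategy that forces any BB algorithm to spend at least a prescribed amount of time per instance, and both take $Q\to\infty$ so that any finite transient cost is amortized away.

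For $C_{BB}(\sg)\le\gamma^*$, I would fix an arbitrary $\sg_k\in\Gamma$ and show that the throughput of any BB algorithm $\sa$ is at most $\gamma_k$, which by minimization yields the claim. By the very definition of $\Gamma$, there exists a behavior of the faulty nodes that, after a bounded number of instances, produces the dispute pattern encoded by $\sg_k$. The key observation is that the same adversary strategy can be applied against any algorithm $\sa$: for every edge $(a,b)$ that was removed when producing $\sg_k$, at least one endpoint is faulty, and by replaying the same inconsistent behavior the fault-free endpoint cannot extract reliable information from that edge in the long run. Asymptotically, $\sa$ can rely only on edges in $\sg_k$, and the standard directed-graph broadcast (max-flow) argument gives per-instance throughput at most $\gamma_k=\min_{j\in\sv_k}MINCUT(\sg_k,1,j)$. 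Since $Q\to\infty$ washes out the finite transient during which the dispute pattern is being formed, $T(\sg,L,\sa)\le\gamma_k$, and taking the minimum over $\sg_k\in\Gamma$ yields $C_{BB}(\sg)\le\gamma^*$.

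For $C_{BB}(\sg)\le 2\rho^*$, I would fix any $H\in\Omega_1$ and any two vertices $i,j\in H$, and prove $C_{BB}(\sg)\le MINCUT(\overline H, i, j)$. Let $(A,B)$ be a minimum cut of $\overline H$ separating $i\in A$ from $j\in B$, and let $w$ denote its undirected capacity (the sum of the capacities of all directed edges across the cut, in either direction). I would use a classical Byzantine indistinguishability argument: construct two scenarios in which the source broadcasts different $L$-bit inputs $x_1\ne x_2$, and in each scenario partition the $f$ faulty nodes outside $H$ between the $A$- and $B$-sides so that the faulty nodes on each side simulate the ``other'' scenario from that side's viewpoint. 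By the validity and agreement conditions the two scenarios must be distinguished by the fault-free nodes in $H$, but the only channel over which fault-free nodes on opposite sides of the cut can exchange distinguishing bits is the undirected cut $(A,B)$, whose aggregate bandwidth is $w$. Hence distinguishing $L$ bits of input takes at least $L/w$ time per instance, giving throughput at most $w$. Minimizing over $H\in\Omega_1$ and pairs $i,j\in H$ yields $C_{BB}(\sg)\le U_1=2\rho^*$.

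The main obstacle I expect is the rigorous formalization of the second bound. The existential Byzantine indistinguishability arguments going back to Pease, Shostak, and Lamport give only an on/off statement; converting this into a tight capacity bound requires carefully quantifying how many bits of distinguishing information must cross the cut per instance, and coordinating two simultaneous adversarial simulations in which both sides of the cut can speak. In particular, the partition of the $f$ faulty nodes between $A$ and $B$, and the appeal to $n\ge 3f+1$ to ensure that the adversary has enough power to simulate the ``other world'' on each side, are the delicate places where the argument can break. The first bound is comparatively routine once the basic max-flow broadcast fact is taken for granted.
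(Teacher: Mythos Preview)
Your overall decomposition into the two bounds $C_{BB}(\sg)\le\gamma^*$ and $C_{BB}(\sg)\le 2\rho^*$ is correct, and for the second bound your high-level idea (an indistinguishability argument across an undirected cut of some $H\in\Omega_1$) matches the paper. However, both halves of your proposal contain genuine gaps.

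\paragraph{The bound $C_{BB}(\sg)\le\gamma^*$.} Your argument is circular: you invoke the NAB-specific notion of ``dispute pattern'' and ``replaying the same inconsistent behavior'' to constrain an \emph{arbitrary} algorithm $\sa$. But $\sa$ need not maintain disputes, and misbehavior that triggers a dispute in NAB may be harmless to $\sa$. The paper's argument is entirely algorithm-independent: given $\Psi_W\in\Gamma$, pick an explaining set $F$ not containing the source, and let the nodes in $F$ simply \emph{refuse to transmit} over edges in $W$ while otherwise following the protocol honestly. No information about the input then traverses $W$, so any fault-free node $i\notin F$ must recover the input using only the edges of $\Psi_W$; validity plus max-flow/min-cut gives $C_{BB}(\sg)\le MINCUT(\Psi_W,1,i)$. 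You also miss a second subtlety: a node $i$ that lies in \emph{some} explaining set $F_1$ but not in $\bigcap_\delta F_\delta$ is still in $\Psi_W$, yet the validity argument above does not directly apply to it (it might itself be faulty under $F_1$). The paper handles this with a separate two-explaining-set indistinguishability argument, forcing $i$ to agree with nodes in $\sv-F_1-F_2$ via the agreement condition.

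\paragraph{The bound $C_{BB}(\sg)\le 2\rho^*$.} Your adversary is mis-specified. There is no ``partition of the $f$ faulty nodes between the $A$- and $B$-sides,'' and $n\ge 3f+1$ plays no role here. In the paper's construction the faulty set is exactly $F=\sv\setminus H$ (all $f$ nodes outside $H$), and $A,B$ (the paper's $\SL,\SR$) partition the \emph{fault-free} nodes of $H$. The argument uses three scenarios, not two: (1) input $u$, no misbehavior; (2) input $v$, no misbehavior; (3) all of $F$ faulty, behaving toward $\SL$ as in (1) and toward $\SR$ as in (2). The crucial technical step you omit is a pigeonhole argument: if some algorithm achieved rate $U_H+\epsilon$, then in $t$ time units at most $tU_H$ bits cross the cut $E$, so there exist $u\neq v$ inducing \emph{identical} traffic on $E$ in the fault-free runs. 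That is what makes $\SL$ unable to distinguish (1) from (3) and $\SR$ unable to distinguish (2) from (3), yielding the contradiction. Your phrase ``distinguishing $L$ bits of input takes at least $L/w$ time'' gestures at this but is not a proof without the explicit pigeonhole step and the three-scenario construction.
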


Appendix \ref{app:upperbound} presents a proof of this upper bound.
%
%
%
Given the throughput lower bound $T_{NAB}(\sg)$ in (\ref{eq:throughput}) and the
upper bound on $C_{BB}(\sg)$ from Theorem \ref{thm:BB:bound}, as
shown in Appendix \ref{app:fraction}, the result below can be
obtained.

\begin{theorem}
\label{thm:fraction}
For graph $\sg(\sv,\se)$:
$$
\lim_{L\rightarrow \infty} T(\sg,L,NAB)\ge \min(\gamma^*,2\rho^*)/3 \ge C_{BB}(\sg)/3.
$$
Moreover, when $\gamma^*\le \rho^*$:
$$
\lim_{L\rightarrow \infty} T(\sg,L,NAB)\ge \min(\gamma^*,2\rho^*)/2 \ge C_{BB}(\sg)/2.
$$
\end{theorem}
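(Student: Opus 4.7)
The plan is to combine the two inputs already available: the throughput lower bound $T(\sg,L,NAB)\ge \gamma^*\rho^*/(\gamma^*+\rho^*)$ from (\ref{eq:throughput}) and the capacity upper bound $C_{BB}(\sg)\le \min(\gamma^*,2\rho^*)$ from Theorem~\ref{thm:BB:bound}. The only thing left to establish is the purely algebraic inequality $\gamma^*\rho^*/(\gamma^*+\rho^*)\ge \min(\gamma^*,2\rho^*)/3$ in general, and $\gamma^*\rho^*/(\gamma^*+\rho^*)\ge \min(\gamma^*,2\rho^*)/2$ under the extra hypothesis $\gamma^*\le \rho^*$. Both are elementary; the cleanest way is a case split based on the sign of $\gamma^*-2\rho^*$, since that is exactly where the $\min$ on the right-hand side changes its active branch.

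For the general bound, I would split into two cases. In the case $\gamma^*\le 2\rho^*$, we have $\min(\gamma^*,2\rho^*)=\gamma^*$, so the claim $\gamma^*\rho^*/(\gamma^*+\rho^*)\ge \gamma^*/3$ reduces (after canceling the positive factor $\gamma^*$) to $3\rho^*\ge \gamma^*+\rho^*$, i.e., $2\rho^*\ge \gamma^*$, which is exactly the case assumption. In the case $\gamma^*>2\rho^*$, we have $\min(\gamma^*,2\rho^*)=2\rho^*$, and the claim $\gamma^*\rho^*/(\gamma^*+\rho^*)\ge 2\rho^*/3$ reduces (after canceling the positive factor $\rho^*$) to $3\gamma^*\ge 2\gamma^*+2\rho^*$, i.e., $\gamma^*\ge 2\rho^*$, which again holds by the case assumption. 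Combining with $T(\sg,L,NAB)\ge \gamma^*\rho^*/(\gamma^*+\rho^*)$ and $C_{BB}(\sg)\le \min(\gamma^*,2\rho^*)$ gives the first displayed inequality in the theorem, after taking $\lim_{L\to\infty}$.

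For the improved bound, assume $\gamma^*\le \rho^*$. Then in particular $\gamma^*\le 2\rho^*$, so $\min(\gamma^*,2\rho^*)=\gamma^*$, and the target $\gamma^*\rho^*/(\gamma^*+\rho^*)\ge \gamma^*/2$ reduces to $2\rho^*\ge \gamma^*+\rho^*$, i.e., $\rho^*\ge \gamma^*$, which is exactly the hypothesis. Plugging this into the throughput lower bound and the capacity upper bound again gives the second displayed inequality.

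There is no real obstacle here; the two potential pitfalls are just (i) keeping the direction of the inequality straight while canceling the positive quantities $\gamma^*$ and $\rho^*$ (which are positive because connectivity is at least $2f+1>0$ and $U_1\ge 2\rho^*>0$), and (ii) making sure the case split is aligned with the branch active in $\min(\gamma^*,2\rho^*)$ rather than with the hypothesis $\gamma^*\le \rho^*$ used only in the second part. Once those are lined up, the proof is a two-line algebraic verification in each case, and the two inequality chains in the theorem statement follow by substitution.
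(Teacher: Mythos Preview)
Your proposal is correct and follows essentially the same approach as the paper: both combine the throughput lower bound $\gamma^*\rho^*/(\gamma^*+\rho^*)$ with the capacity upper bound $\min(\gamma^*,2\rho^*)$ and verify the needed algebraic inequality via a case split on whether $\gamma^*\le 2\rho^*$ (with the extra case $\gamma^*\le\rho^*$ handling the sharper factor $1/2$). The only cosmetic difference is that the paper phrases each case via a monotonicity-and-substitution argument rather than your direct cancellation, but the content is identical.
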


\section{Conclusion}
This paper presents NAB, a network-aware Byzantine broadcast algorithm for point-to-point networks. We derive an upper bound on the capacity of Byzantine broadcast, and show that NAB can achieve throughput at least 1/3 fraction of the capacity over a large number of execution instances, when $L$ is large. The fraction can be improved to at least 1/2 when the network satisfies an additional condition.

\bibliography{PaperList}

\appendix
\appendixpage

\section{Unreliable Broadcast in Phase 1}
\label{app:unreliable_broadcast}

According to \cite{spanningTreePacking}, in a given graph $\sg_k$ with $\gamma_k =\min_{j\in \sv_k}~MINCUT(\sg_k,1,j)$, there always exist a set of $\gamma_k$ unit-capacity spanning trees of $\sg_k$ such that the total usage on each edge  $e\in \se_k$ by all the $\gamma_k$ spanning trees combined is no more than its link capacity $z_e$.
Each spanning tree is ``unit-capacity'' in the sense that 1 unit capacity
of each link on that tree is allocated for transmissions on that tree.
 For example, Figure \ref{fig:directed-trees} shows 2 unit-capacity spanning trees that can be embedded in the directed graph in Figure \ref{fig:directed}: one spanning tree is shown with solid edges and the other spanning tree is shown in dotted edges.
Observe that  link (1,2) is used by both spanning trees, each tree using a unit capacity
on link (1,2), for a total usage of 2 units, which is the capacity of link (1,2).

To broadcast an $L$-bit value from source node 1,
we represent the $L$-bit value as $\gamma_k$ symbols, each symbol being
represented using $L/\gamma_k$ bits. One symbol ($L/\gamma_k$ bits) is then
transmitted along each of the $\gamma_k$ unit-capacity spanning trees.

\section{Dispute Control}
\label{app:diagnosis}

The dispute control algorithm is performed in the $k$-th instance of NAB
only if at least one node misbehaves during Phases 1 or 2. The goal of dispute
control is to learn some information about the identity of at least one faulty
node. In particular, the dispute control algorithm will identify a new node
as being faulty, or/and identify a new node pair in dispute (at least one of the nodes
in the pair is guaranteed to be faulty). The steps in dispute control
in the $k$-th instance of NAB are as follows:
\begin{itemize}
\item (DC1) Each node $i$ in $\sv_k$ uses a previously proposed 
Byzantine broadcast algorithm, such as \cite{opt_bit_Welch92},
to broadcast to all other nodes in $\sv_k$ all the messages
that this node $i$ claims to have received from
other nodes, and sent to the other nodes, during Phases 1 and 2 of the $k$-th instance. 
Source node 1 also uses an existing Byzantine broadcast algorithm \cite{opt_bit_Welch92}
to broadcast its $L$-bit input for the $k$-th instance to all the other nodes.
Thus, at the end of this step, all the fault-free nodes will reach correct
agreement for the output for the $k$-th instance.

\item (DC2) If for some node pair $a,b\in\sv_k$, a message that node $a$
claims above to have sent to node $b$ mismatches with the claim of received
messages made by node $b$, then node pair $a,b$ is found in dispute.
In step DC1, since a Byzantine broadcast algorithm is used to disseminate the
claims, all the fault-free nodes will identify identical node pairs in dispute.

It should be clear that a pair of fault-free nodes will never be found in
dispute with each other in this step.

\item (DC3) The NAB algorithm is deterministic in nature. Therefore, the messages
that should be sent by each node in Phases 1 and 2 can be completely determined by
the messages that the node receives, and, in case of node 1, its initial input.
Thus, if the claims of the messages sent by some node $i$ are inconsistent
with the message it claims to have received, and its initial input (in case of node 1),
then that node $i$ must be faulty. Again, all fault-free nodes identify these
faulty nodes identically. Any nodes thus identified as faulty until now (including
all previous instances of NAB) are deemed to be ``in dispute'' with all
their neighbors (to whom the faulty nodes have incoming or outgoing links).

It should be clear that a fault-free node will never be found to be
faulty in this step.

\item (DC4) Consider the node pairs that have been identified
as being in dispute in DC2 and DC3 of at least one instances of NAB so far.

We will say that a set of nodes $F_i$, where $|F_i|\leq f$, ``explains'' all the
disputes so far, if for each pair $a,b$ found in dispute so far, at least one of
$a$ and $b$ is in $F_i$.
It should be easy to see that for any set of disputes that may be observed, there
must be at least one such set that {\em explains} the disputes.
It is easy to argue that the nodes in the set below must be necessarily faulty
(in fact, the nodes in the set intersection below are also guaranteed to include
nodes identified as faulty in step DC3).
$$\bigcap_{\delta=1}^{\Delta} F_\delta$$
Then, $\sv_{k+1}$ is obtained as $\sv_k-\bigcap_{\delta=1}^{\Delta} F_\delta$.
$\se_{k+1}$ is obtained by removing from $\se_k$ edges incident on nodes 
in $\bigcap_{\delta=1}^{\Delta} F_\delta$, and also excluding edges between node
pairs that have been found in dispute so far.
\end{itemize}

As noted earlier, the above dispute control phase may be executed in at most $f(f+1)$
instances of NAB.

\section{Proof of Theorem \ref{thm:fault-detection}}
\label{app:fault-detection}

To prove Theorem \ref{thm:fault-detection}, we first prove that when the coding matrices are generated at random as described, for a particular subgraph $H\in \Omega_k$, with non-zero probability, the coding matrices $\{{\bf C_e}|e\in \sg_k\}$ defines a matrix $\bf C_H$ (as defined later) such that $\bf D_H C_H=0$ if and only if $\bf D_H = 0$. Then we prove that this is also {\bf simultaneously} true for all subgraphs $H\in \Omega_k$.

\subsection{For a given subgraph $H\in\Omega_k$}
\label{subsec:MEQ:correctness}

Consider any subgraph $H\in\Omega_k$. For each edge $e=(i,j)$ in $H$, we ``expand'' the corresponding coding matrix $\bf C_e$ (of size  $\rho_k \times z_e$) to a $(n-f-1)\rho_k \times z_e$ matrix $\bf B_e$ as follows: $\bf B_e$ consists $n-f-1$ blocks, each block is a $\rho_k \times z_e$ matrix:
\begin{itemize}
\item If $i\neq n-f$ and $j\neq n-f$, then the $i$-th and $j$-th block equal to $\bf C_e$ and $-{\bf C_e}$, respectively. The other blocks are all set to $\bf 0$. 
$$
{\bf B_e}^T = \bordermatrix{~ & &  i & & j &\cr
&   \bf  0 \cdots 0& {\bf C_e}^T & \bf 0 \cdots 0& -{\bf C_e}^T &\bf 0 \cdots  0\cr
}
$$
Here $()^T$ denotes the transpose of a matrix or vector.
\item If $i= n-f$, then the $j$-th block equals to $-{\bf C_e}$, and the other blocks are all set to 0 matrix.
$$
{\bf B_e}^T = \bordermatrix{~ & &  j & \cr
&    \bf 0 \cdots 0& -{\bf C_e}^T & \bf 0 \cdots 0\cr
}
$$
\item If $j= n-f$, then the $i$-th block equals to $\bf C_e$, and the other blocks are all set to 0 matrix.
$$
{\bf B_e}^T = \bordermatrix{~ & &  i & \cr
& \bf 0 \cdots 0& {\bf C_e}^T & \bf 0 \cdots 0\cr
}
$$

\end{itemize}

Let $D_{i,\beta} = X_i(\beta) - X_{n-f}(\beta)$ for $i< n-f$ as the difference between $\bf X_i$ and $\bf X_{n-f}$ in the $\beta$-th element. Recall that $\bf D_i = X_i - X_{n-f} = \begin{pmatrix}D_{i,1}&\cdots&D_{i,\rho_k}\end{pmatrix}$ and ${\bf D_H} = \begin{pmatrix}\bf D_1 & \cdots &\bf D_{n-f-1} \end{pmatrix}$. So $\bf D_H$ is a row vector of $(n-f-1)\rho_k$ elements from $GF(2^{L/\rho_k})$ that captures the differences between $\bf X_i$ and $\bf X_{n-f}$ for all $i<n-f$. It should be easy to see that 
$$
({\bf X_i} - {\bf X_j}){\bf C_e} =\bf 0 \Leftrightarrow \bf D_H B_e  = 0.
$$
So for edge $e$, steps 1-2 of Algorithm \ref{alg:MEQ:new} have the effect of checking whether or not $\bf D_H B_e = 0$.

If we label the set of edges in $H$ as $e1,e2,\cdots$, and let $m$ be the sum of the capacities of all edges in $H$, then we construct a $(n-f-1)\rho_k \times m$ matrix  $\bf C_H$ by concatenating all expanded coding matrices:
$$
{\bf C_H} = \begin{pmatrix}
{\bf B_{e1}} & 
{\bf B_{e2}} &
\cdots 
\end{pmatrix},
$$
where each column of $\bf C_H$ represents one coded symbol sent in $H$ over the corresponding edge.
Then steps 1-2 of Algorithm \ref{alg:MEQ:new} for all edges in $H$ have the same effect of checking whether or not $\bf D_H C_H = 0$. So to prove Theorem \ref{thm:fault-detection}, we need to show that there exists at least one $\bf C_H$ such that $$\bf D_H C_H = 0 ~\Leftrightarrow~ D_H = 0.$$

It is obvious that if $\bf D_H = 0$, then $\bf D_H C_H = 0$ for any $\bf C_H$. So all left to show is that there exists at least one $\bf C_H$ such that $\bf D_H C_H = 0 \Rightarrow D_H=0$. It is then sufficient to show that $\bf C_H$ contains a $(n-f-1)\rho_k \times (n-f-1)\rho_k$ submatrix $\bf M_H$ that is {\bf invertible}, because when such an invertible submatrix exist,
$$\bf D_H C_H = 0 ~\Rightarrow ~ D_H M_H = 0 ~\Rightarrow~D_H = 0.$$

Now we describe how one such submatrix $\bf M_H$ can be obtains. 
Notice that each column of $\bf C_H$ represents one coded symbol sent on the corresponding edge. A $(n-f-1)\rho_k \times (n-f-1)$ submatrix $\bf S$ of $\bf C_H$ is said to be a ``spanning matrix'' of $H$ if the edges corresponding to the columns of $\bf S$ form a undirected spanning tree of $\overline{H}$ -- the {\em undirected} representation of $H$. In Figure \ref{fig:undirected-tree}, an undirected spanning tree of the undirected graph in Figure \ref{fig:undirected} is shown in dotted edges. It is worth pointing out that an undirected spanning tree in an undirected graph $\overline{H}$ does not necessarily correspond to a directed spanning tree in the corresponding directed graph $H$. For example, the directed edges in Figure \ref{fig:directed} corresponding to the dotted undirected edges in Figure \ref{fig:undirected-tree} do not consist a spanning tree in the directed graph in Figure \ref{fig:directed}.

It is known that in an undirected graph whose MINCUT equals to $U$, at least $U/2$ undirected unit-capacity spanning trees can be embedded \cite{spanningTreePacking}\footnote{\normalsize The definition of embedding undirected unit-capacity spanning trees in undirected graphs is similar to embedding directed unit-capacity spanning trees in directed graphs (by dropping the direction of edges).} This implies that $\bf C_H$ contains a set of $U_k/2$ spanning matrices such that no two spanning matrices in the set covers the same column in  $\bf C_H$. Let $\{\bf S_1,\cdots,S_{\rho_k}\}$ be one set of $\rho_k\le U_k/2$ such spanning matrices of $H$. Then union of these spanning matrices forms an $(n-f-1)\rho_k \times (n-f-1)\rho_k$ submatrix of $\bf C_H$:
$$
\bf M_H = \begin{pmatrix}
\bf S_1 &
\cdots &
\bf S_{\rho_k}
\end{pmatrix}.
$$

Next, we will show that when the set of coding matrices are generated as described in Theorem \ref{thm:fault-detection}, with non-zero probability we obtain an invertible square matrix $\bf M_H$. When $\bf M_H$ is invertible, 
$$\bf D_H M_H =0 ~\Leftrightarrow~ D_H = 0 ~\Leftrightarrow~ X_1=\cdots = X_{n-f}.$$

For the following discussion, it is convenient to reorder the elements of $\bf D_H$ into 
$${\bf \tilde{D}_H} = \begin{pmatrix}D_{1,1} & \cdots& D_{n-f-1,1}&D_{1,2}& \cdots& D_{n-f-1,2}&\cdots& D_{1,\rho_k}& \cdots& D_{n-f-1,\rho_k}\end{pmatrix},$$
so that the $(\beta-1)(n-f-1)+1$-th through the $\beta(n-f-1)$ elements of $\bf \tilde{D}_H$ represent the difference between $\bf X_i$ ($i=1,\cdots,n-f-1$) and $\bf X_{n-f}$ in the $\beta$-th element.

We also reorder the rows of each spanning matrix $\bf S_q$ ($q=1,\cdots,\rho_k$) accordingly. It can be showed that after reordering, $\bf S_q$ becomes $\bf \tilde{S}_q$ and has the following structure:
\begin{equation}
\bf \tilde{S}_q = 
\begin{pmatrix}
\bf A_q S_{q,1}  \\
\bf A_q S_{q,2} \\
 \vdots \\
\bf A_q S_{q,\rho_k}
\end{pmatrix}.
\end{equation}

Here $\bf A_q$ is a $(n-f-1)\times (n-f-1)$ square matrix, and it is called the {\em adjacency} matrix of the spanning tree corresponding to $\bf S_q$. $\bf A_q$ is formed as follows. Suppose that the $r$-th column of $\bf S_q$ corresponds to a coded symbol sent over a directed edge $(i,j)$ in $H$, then 
\begin{enumerate}
\item If $i\neq n-f$ and $j\neq n-f$, then the $r$-th column of $\bf A_q$ has the $i$-th element as 1 and the $j$-th element as -1, the remaining entries in that column are all 0;

\item If $i=n-f$, then the $j$-th element of the $r$-th column of $\bf A_q$ is set to -1, the remaining elements of that column are all 0;

\item If $j=n-f$, then the $i$-th element of the $r$-th column of $\bf A_q$ is set to 1, the remaining elements of that column are all 0.
\end{enumerate}
For example, suppose $\overline{H}$ is the graph shown in Figure \ref{fig:undirected}, and $\bf S_q$ corresponds to a spanning tree of $H$ consisting of the dotted edges in Figure \ref{fig:undirected-tree}. Suppose that we index the corresponding directed edges in the graph shown in Figure \ref{fig:directed} in the following order: (2,3), (1,4), (4,3). The resulting adjacency matrix $\bf A_q =  
\begin{pmatrix}
0 & 1 & 0 \\
1 & 0 & 0 \\
-1 & 0 & -1 
\end{pmatrix}$.

On the other hand, each  $(n-f-1)\times (n-f-1)$ square matrix $\bf S_{q,p}$ is a diagonal matrix. The $r$-th diagonal element of $\bf S_{q,p}$ equals to the $p$-th coefficient used to compute the coded symbol corresponding to the $r$-th column of $\bf S_q$. For example, suppose the first column of $\bf S_q$ corresponds to a coded packet $ X_1(1) + 2X_1(2)$ being sent on link $(1,2)$. Then the first diagonal elements of $\bf S_{q,1}$ and $\bf S_{q,2}$ are 1 and 2, respectively. 

So after reordering, $\bf M_H$ can be written as $\bf \tilde{M}_H$ that has the following structure:
\begin{equation}
\bf \tilde{M}_H = \begin{pmatrix}
\bf A_1 S_{1,1}  & \bf A_2 S_{2,1} & \cdots & \bf A_{\rho_k}C_{\rho_k,\rho_k} \\
\bf A_1 S_{1,2} & \bf A_2 S_{2,2}  & \cdots & \bf A_{\rho_k}C_{\rho_k, \rho_k} \\
\vdots&& \ddots&\vdots\\
\bf A_1 S_{1,\rho_k} & \bf A_2 S_{2,\rho_k} & \cdots & \bf A_{\rho_k} S_{\rho_k,\rho_k} 
\end{pmatrix}
\end{equation}
Notice that $\bf \tilde{M}_H$ is obtained by permuting the rows of $\bf M_H$. So to show that $\bf M_H$ being invertible is equivalent to $\bf \tilde{M}_H$ being invertible.

Define 
$
\bf M_q = 
\begin{pmatrix}
\bf A_1 S_{1,1} &  \cdots & \bf A_q S_{q,1}\\
\vdots  & \ddots & \vdots\\
\bf A_1 S_{1,q} &  \cdots & \bf A_q S_{q,q}
\end{pmatrix}
$ for $1\le q\le \rho_k$. Note that $\bf M_{q1}$ is a sub-matrix of $\bf M_{q2}$ when $q1< q2$, and $\bf M_{\rho_k} = \tilde{M}_H$. We prove the following lemma:

\begin{lemma}
\label{lm:single_H}
For any $\rho_k\le U_k/2$, with probability at least $\left(1-\frac{n-f-1}{2^{L/\rho_k}}\right)^{\rho_k}$, matrix $\bf \tilde{M}_H$ is invertible. Hence $\bf M_H$ is also invertible.
\end{lemma}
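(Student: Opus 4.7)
The plan is to prove Lemma~\ref{lm:single_H} by induction on $q$ for $1 \le q \le \rho_k$, with inductive hypothesis that $M_q$ is invertible with probability at least $\bigl(1 - (n-f-1)/2^{L/\rho_k}\bigr)^q$. Since $M_{\rho_k} = \tilde{M}_H$, and $\tilde{M}_H$ is a row permutation of $M_H$, the lemma follows immediately from the case $q = \rho_k$.

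Two preparatory facts underlie the argument. First, each $A_q$ is a reduced incidence matrix of the $q$-th spanning tree (with the row corresponding to node $n-f$ deleted), so $\det(A_q) = \pm 1$. This is standard: processing the edges of the tree in a leaf-peeling order makes $A_q$ triangular with $\pm 1$ on the diagonal. Second, all the random coefficients appearing on the diagonals of the blocks $\{S_{q,p} : 1 \le q,p \le \rho_k\}$ are jointly independent uniform random variables over $GF(2^{L/\rho_k})$: for distinct $q$ the blocks $S_{q,\cdot}$ involve edges of distinct spanning trees, which are edge-disjoint by the construction of the spanning matrices; distinct rows within a single $S_q$ correspond to distinct edges of the $q$-th tree; and for a fixed edge $e$, the coefficients $c_{e,1},\ldots,c_{e,\rho_k}$ are the independent entries of a single column of the random coding matrix $C_e$.

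The base case $q=1$ is immediate: $\det(M_1) = \det(A_1)\det(S_{1,1}) = \pm \prod_r c_{e_r^{(1)},1}$ is a nonzero polynomial of total degree $n-f-1$ in independent uniform variables, so by the Schwartz-Zippel lemma the probability it vanishes is at most $(n-f-1)/2^{L/\rho_k}$. For the inductive step, I would write
\[
M_q \;=\; \begin{pmatrix} M_{q-1} & N_q \\ R_q & A_q S_{q,q} \end{pmatrix},
\]
condition on the event that $M_{q-1}$ is invertible, and invoke the Schur-complement identity together with fact (i) to obtain
\[
\det(M_q) \;=\; \det(M_{q-1}) \cdot \det(A_q) \cdot \det\bigl(S_{q,q} - A_q^{-1} R_q M_{q-1}^{-1} N_q\bigr).
\]
By fact (ii), the diagonal entries $c_{e_r^{(q)},q}$ of $S_{q,q}$ are independent of all entries of $M_{q-1}$, $N_q$, and $R_q$. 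Viewed as a polynomial in these fresh variables, the last determinant contains the monomial $\prod_r c_{e_r^{(q)},q}$ with coefficient $1$ (from the identity term of the Leibniz expansion, since every non-identity permutation must pick up at least one off-diagonal entry of $S_{q,q} - A_q^{-1} R_q M_{q-1}^{-1} N_q$, which is free of the fresh variables). Hence the determinant is a nonzero polynomial of total degree $n-f-1$, and Schwartz-Zippel yields a conditional probability of at least $1 - (n-f-1)/2^{L/\rho_k}$. Multiplying the $\rho_k$ conditional bounds gives the stated lower bound.

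The only subtle point, and the one I would check most carefully, is the independence assertion in the inductive step: the fresh diagonal variables $c_{e_r^{(q)},q}$ must not appear anywhere in $M_{q-1}$, $N_q$, or $R_q$. For $R_q$ and the $R_q$-like entries of $M_{q-1}$ the relevant coefficients sit on edges of spanning trees other than the $q$-th, which are disjoint from the edges of the $q$-th tree; for $N_q$ and the $N_q$-like entries of $M_{q-1}$ the coefficients are $c_{e,p}$ with $p<q$, and distinct entries in the same column of $C_e$ were independently sampled. Once this disjointness bookkeeping is pinned down, the Schur-complement / Schwartz-Zippel step is routine.
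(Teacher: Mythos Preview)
Your proposal is correct and follows essentially the same route as the paper: induction on $q$, the block/Schur-complement decomposition of $M_q$, invertibility of the tree incidence matrices $A_q$, and a Schwartz--Zippel bound on $\det\bigl(S_{q,q}-A_q^{-1}R_qM_{q-1}^{-1}N_q\bigr)$ using that the diagonal entries of $S_{q,q}$ are fresh independent variables. The only cosmetic difference is that the paper's base case bounds $\Pr[S_{1,1}\text{ invertible}]$ directly as $(1-2^{-L/\rho_k})^{n-f-1}$ rather than via Schwartz--Zippel, and your ``edge-disjoint'' phrasing should be read as disjointness of the unit-capacity coded symbols (columns of $C_H$), which is exactly what the paper's spanning-matrix construction guarantees.
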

\begin{proof}
We now show that each $\bf M_q$ is invertible with probability at least $\left(1-\frac{n-f-1}{2^{L/\rho_k}}\right)^q$ for all $q\le \rho_k$. The proof is done by induction, with $q=1$ being the base case.

\paragraph{Base Case -- $q=1$:}
\begin{equation}
\bf M_1 =  A_1 S_{1,1}.
\end{equation}
As showed later in Appendix \ref{app:A:invertible}, $\bf A_q$ is always invertible and $\det(\bf A_q)=\pm 1$. 
Since $\bf S_{1,1}$ is a $(n-f-1)$-by-$(n-f-1)$ diagonal matrix, it is
invertible provided that all its $(n-f-1)$ diagonal elements are non-zero. Remember that the diagonal elements of $\bf S_{1,1}$ are chosen uniformly and independently  from
$GF(2^{L/\rho_k})$. The probability that they are all non-zero is $\left(1-\frac{1}{2^{L/\rho_k}}\right)^{n-f-1}\ge 1- \frac{n-f-1}{2^{L/\rho_k}}$.

\paragraph{Induction Step -- $q$ to $q+1\le \rho_k$:}
%
%
The square matrix $\bf M_{q+1}$ can be written as
\begin{equation}
\bf M_{q+1} = \begin{pmatrix}
\bf M_q & \bf P_q\\
\bf F_q & \bf A_{q+1}S_{q+1,q+1}
\end{pmatrix},
\end{equation}
where
\begin{equation}
\bf P_q = 
\begin{pmatrix}
\bf A_{q+1} S_{q+1,1}\\
\bf A_{q+1} S_{q+1,1}\\
\vdots\\
\bf A_{q+1} S_{q+1,q}
\end{pmatrix}
\end{equation}
is an $(n-f-1)q$-by-$(n-f-1)$ matrix, and 
\begin{equation}
\bf F_q = 
\begin{pmatrix}
\bf A_1 S_{1,k+1} & \cdots & \bf A_q S_{q,q+1}
\end{pmatrix}
\end{equation}
is an $(n-f-1)$-by-$(n-f-1)q$ matrix.

Assuming that $\bf M_q$ is invertible, we transform $\bf M_{q+1}$ as follows:
\begin{eqnarray}
\bf M_{q+1}' &=& 
\begin{pmatrix}
\bf I_{(n-f-1)q} & \bf 0\\
\bf 0 & \bf A_{q+1}^{-1}
\end{pmatrix}
\bf M_{k+1}
\begin{pmatrix}
\bf I_{(n-f-1)q} & \bf - M_q^{-1} P_q\\
\bf 0 & \bf I_{(n-f-1)}
\end{pmatrix}
\\
& =&
\begin{pmatrix}
\bf I_{(n-f-1)q} & \bf 0\\
\bf 0 & \bf A_{q+1}^{-1}
\end{pmatrix}
\begin{pmatrix}
\bf M_q & \bf P_q\\
\bf F_q & \bf A_{q+1}S_{q+1,q+1}
\end{pmatrix}
\begin{pmatrix}
\bf I_{(n-f-1)q} & \bf - M_q^{-1} P_q\\
\bf  0 & \bf I_{(n-f-1)}
\end{pmatrix}
\\
&=&
\begin{pmatrix}
\bf M_q & \bf 0\\
\bf A_{q+1}^{-1}F_q  & \bf S_{q+1,q+1} - A_{q+1}^{-1}F_q M_q^{-1} P_q 
\end{pmatrix}.
\end{eqnarray}
Here $\bf I_{(n-f-1)q}$ and $\bf I_{(n-f-1)}$ each denote a $(n-f-1)q\times (n-f-1)q$ and a $(n-f-1)\times (n-f-1)$ identity matrices. Note that 
$
|\det(\bf M'_{k+1})|=|\det(\bf M_{k+1})|,
$
since the matrix multiplied at the left has determinant $\pm 1$, and the matrix multiplied at the right has determinant 1.

Observe that the diagonal elements of the $(n-f-1)\times (n-f-1)$ diagonal matrix $\bf S_{q+1,q+1}$ are chosen independently from $\bf A_{q+1}^{-1}F_q M_q^{-1} P_q$. Then it can be proved that $\bf S_{q+1,q+1} - A_{q+1}^{-1}F_q M_q^{-1} P_q$ is invertible with probability at least $1-\frac{n-f-1}{2^{L/\rho_k}}$ (See Appendix \ref{app:M:invertible}.) given that $\bf M_q$ is invertible, which happens with probability at least $\left(1-\frac{n-f-1}{2^{L/\rho_k}}\right)^q$ according to the induction assumption. So we have
\begin{eqnarray}
\Pr\{{\bf M_{q+1}}~\mathrm{is~invertible}\} 
\ge \left(1-\frac{n-f-1}{2^{L/\rho_k}}\right)^q \left(1-\frac{n-f-1}{2^{L/\rho_k}}\right)
=\left(1-\frac{n-f-1}{2^{L/\rho_k}}\right)^{q+1}.
\end{eqnarray}
This completes the induction.
Now we can see that $\bf M_{\rho_k} = \tilde{M}_H$ is invertible with probability
\begin{equation}
\ge~~ \left(1- \frac{n-f-1}{2^{L/\rho_k}}\right)^{\rho_k} 
~\ge~ 1- \frac{(n-f-1)\rho_k}{2^{L/\rho_k}} 
~\rightarrow~ 1,~\mathrm{as}~L\rightarrow \infty.
\end{equation}
\end{proof}

Now we have proved that there exists a set of coding matrices $\{{\bf C_e}|e\in \se_k\}$ such that the resulting  $\bf C_H$ satisfies the condition that $\bf D_H C_H = 0$ if and only if $\bf D_H = 0$.

\subsection{For all subgraphs in $\Omega_k$}
\label{app:BB:all_invertible}
\setcounter{theorem}{3}

\comment{
\begin{theorem}
When performing  Algorithm \ref{alg:MEQ:new} with parameter $\rho=R^*$ in $\sg = G'$, all matrices $M^{(k)}$ are invertible simultaneously with probability at least $1-{n\choose n-f}(n-f-1)R^*/2^{B/R^*}$.
\end{theorem}
}

In this section, we are going to show that, for $\sg_k$, if the coding matrices $\{{\bf C_e}|e\in \se_k\}$ are generated as described in Theorem \ref{thm:fault-detection}, then with non-zero probability the set of square matrices $\{{\bf M_H}|H\in \Omega_k\}$ are all invertible {\bf simultaneously}. When this is true, there exists a set of coding matrices that is correct. 

To show that $\bf M_H$'s for all $H\in \Omega_k$ are simultaneously invertible with non-zero probability, we consider the product of all these square matrices:
$$\prod_{H\in\Omega_k} \bf {M_H}.$$
According to Lemma \ref{lm:single_H},  each $\bf M_H$ ($H\in \Omega_k$) is invertible with non-zero probability. It implies that $\det(\bf M_H)$ is a non-identically-zero polynomial of the random coding coefficients of degree at most $(n-f-1)\rho_k$ (Recall that $\bf M_H$ is a square matrix of size $(n-f-1)\rho_k$.). 
So 
$$\det\left(\prod_{H\in\Omega_k} \bf {M_H}\right) 
= \prod_{H\in \Omega_k}\det\left(\bf M_H\right)$$
is a non-identically-zero polynomial of the random coefficients of degree at most $|\Omega_k|(n-f-1)\rho_k$. Notice that each coded symbol is used once in each subgraph $H$. So each random coefficient appears in at most one column in each $\bf M_H$. It follows that the largest exponent of any random coefficient in $\det\left(\prod_{H\in\Omega_k} \bf {M_H}\right) $ is at most $|\Omega_k|$.

According to Lemma 1 of \cite{Tracy:randomCoding}, the probability that $\det\left(\prod_{H\in\Omega_k} \bf {M_H}\right) $ is non-zero is at least
$$
\left(1-2^{-L/\rho_k}|\Omega_k|\right)^{(n-f-1)\rho_k}
\ge 1-2^{-L/\rho_k}\left[|\Omega_k|(n-f-1)\rho_k\right]
.$$
According to the way $\sg_k$ is constructed and the definition of $\Omega_k$, it should not be hard to see that $\sg_k$ is a subgraph of $\sg_1=\sg$, and $\Omega_k\subseteq \Omega_1$. Notice that $|\Omega_1|= {n\choose n-f}$. So $|\Omega_k|\le {n\choose n-f}$ and Theorem \ref{thm:fault-detection} follows.

\subsection{Proof of $\bf A_q$ being Invertible}
\label{app:A:invertible}
Given an adjacency matrix $\bf A_q$, let us call the corresponding spanning tree of $\overline{H}$ as $T_q$. For edges in $T_q$ incident on node $n-f$, the corresponding columns in $\bf A_q$ have exactly
one non-zero entry. Also, the column corresponding to an edge that is incident on
node $i$ has a non-zero entry in row $i$.	
Since there must be at least one edge in $T_q$ that is incident on node $n-f$, there must be at least one column of $\bf A_q$ that has only one non-zero element. Also, since every node is incident on at least one edge in $T_q$, every row of $\bf A_q$ has at least one non-zero element(s). Since there is at most one edge between every pair of nodes in $T_q$, no two columns in $\bf A_q$ are non-zero in identical rows. Therefore, by column manipulation, we can transform matrix $\bf A_q$ into another matrix in which every row and every column has exactly one non-zero element. Hence $\det(\bf A_q)$ equals to either $1$ or $-1$, and $\bf A_q$ is invertible.

\subsection{Proof of $\bf S_{q+1,q+1} - A_{q+1}^{-1}F_q M_q^{-1} P_q$ being Invertible}
\label{app:M:invertible}
Consider $\bf W$ be an arbitrary {\em fixed} $w\times w$ matrix. Consider a random $w\times w$ diagonal matrix $\bf S$
with $w$ diagonal elements $s_1,\cdots,s_w$.
\begin{equation}
\bf S = 
\begin{pmatrix}
s_1 & 0  & \cdots & 0\\
0 & s_2  & \cdots & 0\\
\vdots & &  \ddots  & \vdots\\
0 & \cdots & 0 & s_w
\end{pmatrix}
\end{equation}
The diagonal elements of $\bf S$ are selected independently and uniformly randomly from $GF(2^{\rho_k})$.
Then we have:
\begin{lemma}
\label{thm:invertible}
The probability that the $w\times w$ matrix $\bf S-W$ is invertible is lower bounded by:
\begin{equation}
\Pr\{\textrm{$(\bf S-W)$ is invertible}\} \ge 1-\frac{w}{2^{\rho_k}}.
\end{equation}
\end{lemma}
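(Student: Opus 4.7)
The plan is to view $\det(\mathbf{S}-\mathbf{W})$ as a polynomial in the independent uniformly random variables $s_1,\dots,s_w$ over $GF(2^{\rho_k})$, show that this polynomial is not identically zero, and then invoke a Schwartz--Zippel style bound to upper bound the probability that it vanishes.

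First I would expand the determinant using the Leibniz formula. The $(i,j)$ entry of $\mathbf{S}-\mathbf{W}$ equals $s_i - W_{ii}$ when $i=j$ and $-W_{ij}$ when $i\neq j$, so any permutation $\sigma\neq \mathrm{id}$ has at least two indices with $\sigma(i)\neq i$, and its contribution to the Leibniz sum involves at most $w-2$ of the variables $s_1,\dots,s_w$. The identity permutation contributes $\prod_{i=1}^{w}(s_i - W_{ii})$, whose fully expanded form contains the monomial $s_1 s_2 \cdots s_w$ with coefficient $1$, and no other permutation can cancel it because no other permutation contributes a monomial of degree $w$ in $s_1,\dots,s_w$. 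Hence $\det(\mathbf{S}-\mathbf{W})$ is a non-identically-zero polynomial in $s_1,\dots,s_w$ of total degree exactly $w$, with a unique top-degree monomial.

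Next I would apply the Schwartz--Zippel lemma: any non-identically-zero polynomial of total degree $d$ over a finite field $\mathbb{F}_q$ vanishes on at most a $d/q$ fraction of $\mathbb{F}_q^{w}$. Taking $d=w$ and $q=2^{\rho_k}$, with $s_1,\dots,s_w$ drawn independently and uniformly from $GF(2^{\rho_k})$, this yields
\[
\Pr\{\det(\mathbf{S}-\mathbf{W})=0\}~\le~\frac{w}{2^{\rho_k}},
\]
which rearranges to the stated bound $\Pr\{(\mathbf{S}-\mathbf{W})\text{ is invertible}\}\ge 1 - w/2^{\rho_k}$.

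The argument uses no property of $\mathbf{W}$ beyond being a fixed matrix, which is exactly what is required when the lemma is invoked in Appendix~\ref{app:M:invertible} with $\mathbf{W}=\mathbf{A}_{q+1}^{-1}\mathbf{F}_q\mathbf{M}_q^{-1}\mathbf{P}_q$; conditioned on $\mathbf{M}_q$ being invertible, this $\mathbf{W}$ is determined by coding coefficients drawn independently of the fresh diagonal entries of $\mathbf{S}_{q+1,q+1}$, so the ``fixed $\mathbf{W}$, random $\mathbf{S}$'' hypothesis holds. The only real bookkeeping step is isolating the unique degree-$w$ monomial in the Leibniz expansion; once non-identical-vanishing and the degree bound are in hand, the remainder is a direct application of a standard fact, so I do not anticipate a significant obstacle.
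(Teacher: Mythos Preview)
Your proposal is correct and takes essentially the same approach as the paper: both arguments expand $\det(\mathbf{S}-\mathbf{W})$, isolate the unique degree-$w$ monomial $s_1 s_2 \cdots s_w$ coming from the diagonal (identity-permutation) term, conclude that the determinant is a nonzero polynomial of total degree $w$, and then apply the Schwartz--Zippel bound over $GF(2^{\rho_k})$. Your Leibniz-formula justification that non-identity permutations contribute at most $w-2$ of the variables is slightly more explicit than the paper's presentation, but the logic is identical.
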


\begin{proof}
Consider the determinant of matrix $\bf S-W$.
\begin{eqnarray}
\det (\bf S-W) &=& \det
\begin{pmatrix}
(s_1 -  W_{1,1}) & -R_{1,2}  & \cdots & -W_{1,w} \\
-R_{2,1} & (s_2 - W_{2,2})  & \cdots & -R_{2,w}\\
\vdots &  & \ddots & \vdots\\
-R_{w,1} & \cdots & -W_{w,w-1} & (s_r - R_{w,w})
\end{pmatrix}\\
&=& (s_1-W_{1,1})(s_2-W_{2,2})\cdots(s_w - W_{w,w}) ~~ +~~  \mbox{other terms}\\
&=& \Pi_{i=1}^w s_i ~~+~~ W_{w-}
\end{eqnarray}
The first term above, $\Pi_{i=1}^w s_i$, is a degree-$w$ polynomial
of $s_1,\cdots, s_w$. $W_{w-}$ is a polynomial
of degree at most $w-1$ of $s_1,\cdots, s_w$, and it represents
the remaining terms in $\det(\bf S-W)$.
Notice that $\det(\bf S-W)$ cannot be identically zero since it contains
only one degree-$w$ term.
Then by the Schwartz-Zippel Theorem, the probability that $\det(\bf S-R) = 0$ is  $\le w/2^{\rho_k}$. Since $\bf S-W$ is invertible if and only if $\det(\bf S-W) \neq 0$, we conclude that
\begin{equation}
\Pr\{\textrm{$(\bf S-W)$ is invertible}\} \ge 1-\frac{w}{2^{\rho_k}}
\end{equation}

By setting $\bf S=S_{q+1,q+1}$, $\bf W=A_{q+1}^{-1}F_q M_q^{-1} P_q$, and $w=n-f-1$, we prove that $\bf S_{q+1,q+1} - A_{q+1}^{-1}F_q M_q^{-1} P_q$ is invertible with probability at least $1-\frac{n-f-1}{2^{L/\rho_k}}$.
\end{proof}

\section{Throughput of NAB}
\label{app:BB:throughput}

First consider the time cost of each operation in instance $k$ of NAB : 
\begin{itemize}
\item {\bf Phase 1}: It takes $L/\gamma_k \le L/\gamma^*$ time units, since unreliable broadcast from the source node 1 at rate  $\gamma_k$ is achievable and $\gamma_k\ge \gamma^*$, as discussed in Appendix \ref{app:unreliable_broadcast}. 

\item  {\bf Phase 2 -- Equality check}: As discussed previously, it takes $L/\rho_k \le  L/\rho^*$ time units. 


\item  {\bf Phase 2 -- Broadcasting outcomes of equality check}: To reliably broadcast the 1-bit flags from the equality check algorithm, a previously proposed Byzantine broadcast algorithm, such as \cite{opt_bit_Welch92}, is used. The algorithm from \cite{opt_bit_Welch92}, denoted as \BSB hereafter, reliably broadcasts 1 bit by communicating no more than $P(n)$ bits in a {\em complete} graph, where $P(n)$ is a polynomial of $n$. In our setting, $\sg$ might not be complete. However, the connectivity of $\sg$ is at least $2f+1$. It is well-known that, in a graph with connectivity at least $2f+1$ and  at most $f$ faulty nodes, reliable  {\em end-to-end} communication from any node $i$ to any other node $j$ can be achieved by sending the same copy of data along a set of $2f+1$ node-disjoint paths from node $i$ to node $j$ and taking the majority at node $j$. By doing this, we can emulate a complete graph in an incomplete graph $\sg$. Then it can be showed that, by running \BSB on top of the emulated complete graph,  reliably broadcasting the 1-bit flags can be completed in $O(n^\alpha)$ time units, for some constant $\alpha>0$. 

\item  {\bf Phase 3}:  If Phase 3 is performed in instance $k$, every node $i$ in $\sv_k$ uses \BSB to reliably broadcast all the messages that it claims to have received from other nodes, and sent to the other nodes, during Phase 1 and 2 of the $k$-th instance.  Similar to the discussion above about broadcasting the outcomes of equality check, it can be showed that the time it takes to complete Phase 3 is $O(Ln^\beta)$ for some constant $\beta>0$.
\end{itemize}

Now consider a sequence of $Q>0$ instances of NAB. As discussed previously, Phase 3 will be performed at most $f(f+1)$ times throughout the execution of the algorithm. So we have the following upper bound of the execution time of $Q$ instances of NAB:
\begin{equation}
t(\sg,L,Q,NAB) \le Q\left(\frac{L}{\gamma^*} + \frac{L}{\rho^*} + O(n^\alpha)\right) + f(f+1)O(Ln^\beta).
\end{equation}
Then the throughput of NAB can be lower bounded by  
\begin{eqnarray}
T(\sg,L,NAB) &=& \lim_{Q\rightarrow \infty} \frac{LQ}{t(\sg,L,Q,NAB)}\\
&\ge& \lim_{Q\rightarrow \infty} \frac{LQ}{Q\left(\frac{L}{\gamma^*} + \frac{L}{\rho^*} + O(n^\alpha)\right) + f(f+1)O(Ln^\beta)}\\
&\ge & \lim_{Q\rightarrow \infty} \left(\frac{\gamma^*+\rho^*}{\gamma^* \rho^*} + \frac{O(n^\alpha)}{L} + \frac{O(n^{\beta+2})}{Q}\right)^{-1}~~~~~~\mbox{(since $f<n/3$)}
\end{eqnarray}

Notice that for a given graph $\sg$, $\{n,\gamma^*, \rho^*, \alpha,\beta\}$ are all constants independent of $L$ and $Q$. So for sufficiently large values of $L$ and $Q$, the last two terms in  the last inequality becomes negligible compared to the first term, and the throughput of NAB approaches to a value that is at least as large as $T_{NAB}$, which is defined
\begin{equation}
T_{NAB}(\sg) = \frac{\gamma^*\rho^*}{\gamma^* +\rho^*}.
\end{equation} 

\begin{figure}[t]
\centering
\includegraphics[width = 6.5 in]{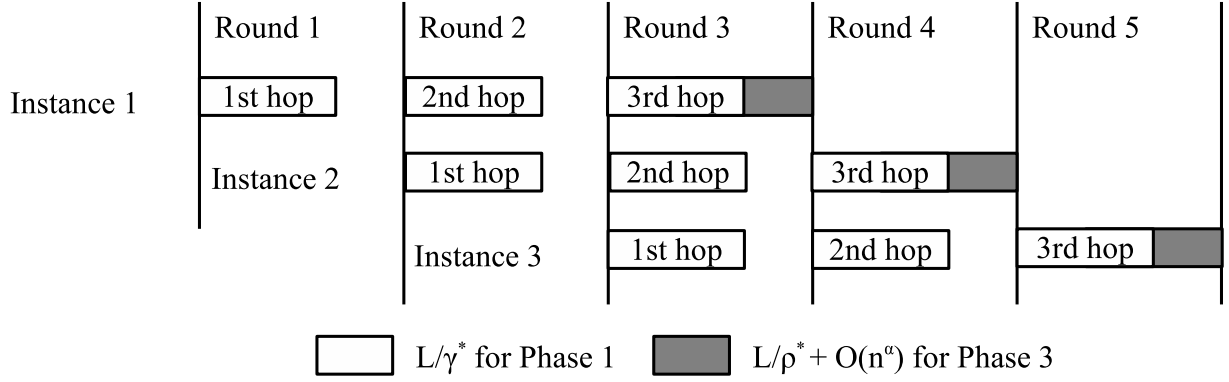}
\caption{Example of pipelining}
\label{fig:pipeline}
\end{figure}

In the above discussion, we implicitly assumed that transmissions during the unreliable broadcast in Phase 1 accomplish all at the same time, by assuming no propagation delay. However, when propagation delay is considered, a node cannot forward a message/symbol until it finishes receiving it. So for the $k$-th instance of NAB, the information broadcast by the source propagates only one hop every $L/\gamma_k$ time units. So for a large network, the ``time span'' of Phase 1 can be much larger than $L/\gamma_k$. This problem can be solved by pipelining: We divide the time horizon into rounds of $\left(\frac{L}{\gamma^*} + \frac{L}{\rho^*} + O( n^\alpha)\right)$ time units. For each instance of NAB, the $L$-bit input from the source node 1 propagates one hop per round, using the first $L/\gamma^*$ time units, until Phase 1 completes. Then the remaining $\left(\frac{L}{\rho^*} + O( n^\alpha)\right)$ time units of the last round is used to perform Phase 2. An example in which the broadcast in Phase 1 takes 3 hops is shown in Figure \ref{fig:pipeline}. By pipelining, we achieve the lower bound from Eq.\ref{eq:throughput}.
 
\section{Construction of $\Gamma$}
\label{app:subgraph}

A subgraph of $\sg$ belonging to $\Gamma$ is obtained as follows:
We will say that edges in $W \subset \se$ are ``explainable'' if there exists
a set $F \subset \sv$ such that
(i) $F$ contains at most $f$ nodes, and (ii) 
each edge in $W$ is incident on at least one node in $F$.
Set $F$ is then said to ``explain set $W$''.

Consider each {\em explainable} set of edges $W\subset \se$. Suppose that
$F_1,\cdots,F_{\Delta}$ are all the subsets of $\sv$ that {\em explain} edge set $W$.
A subgraph $\Psi_W$ of $\sg$ is obtained by removing edges in $W$ from $\se$, and nodes in
$\bigcap_{\delta=1}^{\Delta} F_{\delta}$ from $\sv$ \footnote{\normalsize It is possible that $\Psi_W$ for different $W$
may be identical. This does not affect the correctness of our algorithm.}.
In general, $\Psi_W$ above may or may not contain the source node 1. Only those
$\Psi_W$'s that do contain node 1 belongs to $\Gamma$.

\section{Proof of Theorem \ref{thm:BB:bound}}
\label{app:upperbound}

In arbitrary point-to-point  network $\sg(\sg,\se)$, the capacity of the BB problem with node 1 being the source and up to $f<n/3$ faults satisfies the following upper bounds

\subsection{$C_{BB}(\sg)\le \gamma^*$}
\label{app:upperbound:first}
\begin{proof}
Consider any $\Psi_W\in \Gamma$ and let $W$ is the set of edges in $\sg$ but not in $\Psi_W$. By the construction of $\Gamma$, there must be at least one set $F\subset \sv$ that explains $W$ and does not contain the source node $1$. 
We are going to show that $C_{BB}(\sg)\le MINCUT(\Psi_W,1,i)$ for every node $i\neq 1$ that is in $\Psi_W$.

Notice that there must exist a set of nodes that explains $W$ and does not contain node 1; otherwise node 1 is not in $\Psi_W$. Without loss of generality, assume that $F_1$ is one such set nodes.

First consider any node $i\neq 1$ in $\Psi_W$ but $i\notin F_1$. Let all the nodes in $F_1$ be faulty such that they refuse to communicate over edges in $W$, but otherwise behave correctly. In this case, since the source is fault-free, node $i$ must be able to receive the $L$-bit input that node 1 is trying to broadcast. So $C_{BB}(\sg)\le MINCUT(\Psi_W,1,i)$.

Next we consider a  node $i\neq 1$ in $\Psi_W$ and $i\in F_1$. Notice that node $i$ cannot be contained in all sets of nodes that explain $W$, otherwise node $i$ is not in $\Psi_W$. Then there are only two possibilities:
\begin{enumerate}
\item There exist a set $F$ that explaining $W$ that contains neither node 1 nor node $i$. In this case, $C_{BB}(\sg)\le MINCUT(\Psi_W,1,i)$ according to the above argument by replacing $F_1$ with $F$.
\item Otherwise, any set $F$ that explains $W$ and does not contain node $i$ must contain node 1. Let $F_2$ be one such set of nodes.

Define $V^-=\sv - F_1 - F_2$. $V^-$ is not empty since $F_1$ and $F_2$ both contain at most $f$ nodes and there are $n\ge 3f+1$ nodes in $\sv$. Consider two scenarios with the same input value $x$: (1) Nodes in $F_1$ (does not contain node 1) are faulty and refuse to communicate over edges in $W$, but otherwise behave correctly; and (2) Nodes in $F_2$ (contains node 1) are faulty and refuse to communicate over edges in $W$, but otherwise behave correctly. In both cases, nodes in $V^-$ are fault-free. 

Observe that among edges between nodes in $V^-$ and $F_1\cup F_2$, only edges between  $V^-$ and $F_1\cap F_2$ could have been removed, because otherwise $W$ cannot be explained by both $F_1$ and $F_2$. So nodes in $V^-$ cannot distinguish between the two scenarios above. In scenario (1), the source node 1 is not faulty. Hence nodes in $V^-$ must agree with the value $x$ that node 1 is trying to broadcast, according to the validity condition. Since nodes in $V^-$ cannot distinguish between the two scenarios, they must also set their outputs to $x$ in scenario (2), even though in this case the source node 1 is faulty. Then according to the agreement condition, node $i$ must agree with nodes in $V^-$ in scenario (2), which means that node $i$ also have to learn $x$. So $C_{BB}(\sg)\le MINCUT(\Psi_W,1,i)$.
\end{enumerate}
This completes the proof. 
\end{proof}

\subsection{$C_{BB}(\sg)\le 2\rho^*$}
\label{app:upperbound:second}
\newcommand{\SL}{\mathcal L}
\newcommand{\SR}{\mathcal R}

\begin{proof}
For a subgraph $H\in \Omega_1$ (and accordingly $\overline{H}\in \overline{\Omega}_1$), denote $$U_H = \min_{nodes~i,j ~in~H}MINCUT(\overline{H},i,j).$$
We will prove the upper bound by showing that $C_{BB}(G)\le U_H$ for every $H \in \Omega_1$.

Suppose on the contrary that Byzantine broadcast can be done at a rate $R > U_H + \epsilon$ for some constant $\epsilon > 0$. So there exists a BB algorithm, named $\sa$, that can broadcast $t(U_H+\epsilon)$ bits in using $t$ time units, for some $t>0$.

Let $E$ be a set of edges in $H$ that corresponds to one of the minimum-cuts in $\overline{H}$. In other words, $\sum_{e\in E} z_e = U_H$, and the nodes in $H$ can be partitioned into two non-empty sets $\SL$ and $\SR$ such that $\SL$ and $\SR$ are disconnected from each other if edges in $E$ are removed. Also denote $F$ as the set of nodes that are in $\sg$ but not in $H$. Notice that since  $H$ contains $(n-f)$ nodes, $F$ contains $f$ nodes. 

Notice that in $t$ time units, at most $t U_H < t(U_H+\epsilon)$ bits of information can be sent over edges in $E$. According to the pigeonhole principle, there must exist two different input values  of $t(U_H+\epsilon)$ bits, denoted as $u$ and $v$, such that in the absence of misbehavior, broadcasting $u$ and $v$ with algorithm $\sa$ results in the same communication pattern over edges in $E$. 

First consider the case when $F$ contains the source node 1. Consider the three scenarios using algorithm $\sa$:
\begin{enumerate}
\item Node 1 broadcasts $u$, and none of the nodes misbehaves. So all nodes should set their outputs to $u$. 
\item Node 1 broadcasts $v$, and none of the nodes misbehaves. So all nodes should set their outputs to $v$. 
\item Nodes in $F$ are faulty (includes the source node 1). The faulty nodes in $F$ behave to nodes in $\SL$ as in scenario 1, and behave to nodes in $\SR$  as in scenario 2. 
\end{enumerate}
It can be showed that nodes in $\SL$ cannot distinguish scenario 1 from scenario 3, and nodes in $\SR$ cannot distinguish scenario 2 from scenario 3. So in scenario 3, nodes in $\SL$ set their outputs to $u$ and nodes in $\SR$ set their outputs to $v$. This violates the agreement condition and contradicts with the assumption that $\sa$ solves BB at rate $U_H + \epsilon$. Hence $C_{BB}(\sg)\le U_H$.

Next consider the case when $F$ does not contain the source node 1. Without loss of generality, suppose that node 1 is in $\SL$. Consider the following three scenarios:
\begin{enumerate}
\item Node 1 broadcasts $u$, and none of the nodes misbehaves. So all nodes should set their outputs to $u$. 
\item Node 1 broadcasts $v$, and none of the nodes misbehaves. So all nodes should set their outputs to $v$. 
\item Node 1 broadcasts $u$, and nodes in $F$ are faulty. The faulty nodes in $F$ behave to nodes in $\SL$ as in scenario 1, and behave to nodes in $\SR$ as in scenario 2.
\end{enumerate}
In this case, we can also show that nodes in $\SL$ cannot distinguish scenario 1 from scenario 3, and nodes in $\SR$ cannot distinguish scenario 2 from scenario 3. So in scenario 3, nodes in $\SL$ set their outputs to $u$ and nodes in $\SR$ set their outputs to $v$. This violates the agreement condition and contradicts with the assumption that $\sa$ solves BB at rate $U_H + \epsilon$. Hence $C_{BB}(\sg)\le U_H$, and this completes the proof.
\end{proof}

\section{Proof of Theorem \ref{thm:fraction}}
\label{app:fraction}

Now we compare $T_{NAB}(\sg)$ with the upper bound of $C_{BB}(\sg)$ from Theorem \ref{thm:BB:bound}.
Recall that 
$$ T_{NAB}(\sg) ~= ~ \frac{\gamma^*\rho^*}{\gamma^*+\rho^*}$$
and
$$C_{BB}(\sg)~\le~ \min(\gamma^*,2\rho^*).$$

There are 3 cases:

\begin{enumerate}
\item $\gamma^*\le \rho^*$: Observe that $T_{NAB}(\sg)$ is an increasing function of both  $\gamma^*$ and $\rho^*$. For a given $\gamma^*$, it is minimized when $\rho^*$ is minimized. So
\begin{equation}
T_{NAB}(\sg) \ge \frac{{\gamma^*}^2}{\gamma^*+\gamma^*} = \frac{\gamma^*}{2}\ge \frac{C_{BB}(\sg)}{2}.
\end{equation}
The last inequality is due to $\gamma^*\ge C_{BB}(\sg)$.
\item $\gamma^*\le 2\rho^*$: 
\begin{eqnarray}
T_{NAB}(\sg) 
\ge  \frac{\gamma^*\rho^*}{2\rho^*+\rho^*} =\frac{\gamma^*}{3}\ge \frac{C_{BB}(\sg)}{3}.
\end{eqnarray}
The last inequality is due to $\gamma^*\ge C_{BB}(\sg)$.
\item $\gamma^*> 2\rho^*$: Since $T_{NAB}(\sg)$ is an increasing function of both  $\gamma^*$, for a given $\rho^*$, it is minimized when $\gamma^*$ is minimized. So
\begin{eqnarray}
T_{NAB}(\sg) &\ge& \frac{2{\rho^*}^2}{2\rho^*+\rho^*} = \frac{2\rho^*}{3}\ge \frac{C_{BB}(\sg)}{3}.
\end{eqnarray}
The second inequality is due to $2\rho^*\ge C_{BB}(\sg)$.
\end{enumerate}

\end{document}